%% file: main.tex
\documentclass[11pt]{article}
\usepackage[margin=1in]{geometry}
\usepackage{amsmath,amssymb,amsthm,thm-restate}
\usepackage[hidelinks]{hyperref}
\usepackage[capitalize]{cleveref}
\usepackage{mathtools}
\usepackage{mathrsfs}
\usepackage{microtype}
\usepackage[T1]{fontenc}
\usepackage[utf8]{inputenc} 
\usepackage[switch]{lineno}

\newtheorem{theorem}{Theorem}
\newtheorem{corollary}[theorem]{Corollary}
\newtheorem{conjecture}[theorem]{Conjecture}
\newtheorem{lemma}[theorem]{Lemma}
\newtheorem{claim}[theorem]{Claim}

\theoremstyle{definition}
\newtheorem{definition}[theorem]{Definition}

\usepackage{xcolor}
\usepackage[textsize=tiny]{todonotes}
\newcommand{\CC}{{\mathscr C}}
\newcommand{\DD}{{\mathscr D}}
\newcommand{\BB}{{\mathscr B}}

\newcommand{\PP}{{\mathcal P}}
\newcommand{\N}{\mathbb{N}}

\newcommand{\eps}{\varepsilon}
\newcommand{\Oh}{\mathcal{O}}
\newcommand{\inAppendix}{$\clubsuit$}
\newcommand{\inducedSS}[2]{#1/#2}

\renewcommand{\leq}{\leqslant}
\renewcommand{\le}{\leqslant}
\renewcommand{\geq}{\geqslant}
\renewcommand{\ge}{\geqslant}
\renewcommand{\subset}{\subseteq}
\newcommand{\set}[1]{\{#1\}}
\newcommand{\setof}[2]{\set{#1 \,|\, #2 }}
\newcommand{\from}{\colon}
\newcommand{\res}[1]{{\downharpoonright}_{#1}}
\renewcommand{\cal}{\mathcal}

\DeclareMathOperator{\VCdim}{VCdim}
\DeclareMathOperator{\polylog}{polylog}

\DeclareMathOperator{\subpoly}{subpoly}

\usepackage{enumitem}
\setlist[itemize]{topsep=4pt,itemsep=3pt,parsep=0pt} 
\setlist[enumerate]{topsep=4pt,itemsep=3pt,parsep=0pt}
\setlist[description]{topsep=4pt,itemsep=3pt,parsep=0pt} 

\usepackage{comment}
\usepackage{tcolorbox}
\tcbuselibrary{skins,breakable}

\newif\ifcomments
\commentstrue        

\newlength{\NormalParindent}
\newlength{\NormalParskip}
\AtBeginDocument{%
  \setlength{\NormalParindent}{\parindent}%
  \setlength{\NormalParskip}{\parskip}%
}

\newcommand{\createCommentEnv}[3]{%
  \ifcomments
     \newtcolorbox{#1}{
      breakable, enhanced,
      colback=#3!10,
      colframe=#3,
      boxrule=1pt,
      borderline west={2pt}{0pt}{#3},
      left=10pt,right=10pt,top=5pt,bottom=5pt,
      boxsep=0pt,      
      before skip=\NormalParskip,
      after skip=\NormalParskip,      
      title={\bfseries #2:},
      width=\dimexpr\linewidth+10pt\relax,
      left skip=-10pt,
      before upper={
        \setlength{\parindent}{\NormalParindent}
        \setlength{\parskip}{\NormalParskip}
      },
    }%
  \else
    \excludecomment{#1}%
  \fi
}

\createCommentEnv{szin}{Sz}{red}
\createCommentEnv{janin}{Jan}{green}
\createCommentEnv{RMMin}{RMM}{orange}
\createCommentEnv{claudein}{Claude}{cyan}

\let\phi\varphi
\let\epsilon\varepsilon
\newcommand{\Pp}{\mathcal{P}}

\newcommand{\funding}{
  NM received funding from the European Union through an ERA Fellowship with grant agreement No.\ 101334340 -- LoCoMoDe.
  RM was supported by the National Science Foundation under Grant No.\ DMS-2452111.
  MP was supported by the project BOBR that has received funding from the European Research Council (ERC) under the European Union’s Horizon 2020 research and innovation programme, grant agreement No.\ 948057.
  SzT received funding from the European Research Council (ERC) with grant agreement No.\ 101126229 -- BUKA.\\[0.5em]
\hspace*{\fill}\includegraphics[width=40px]{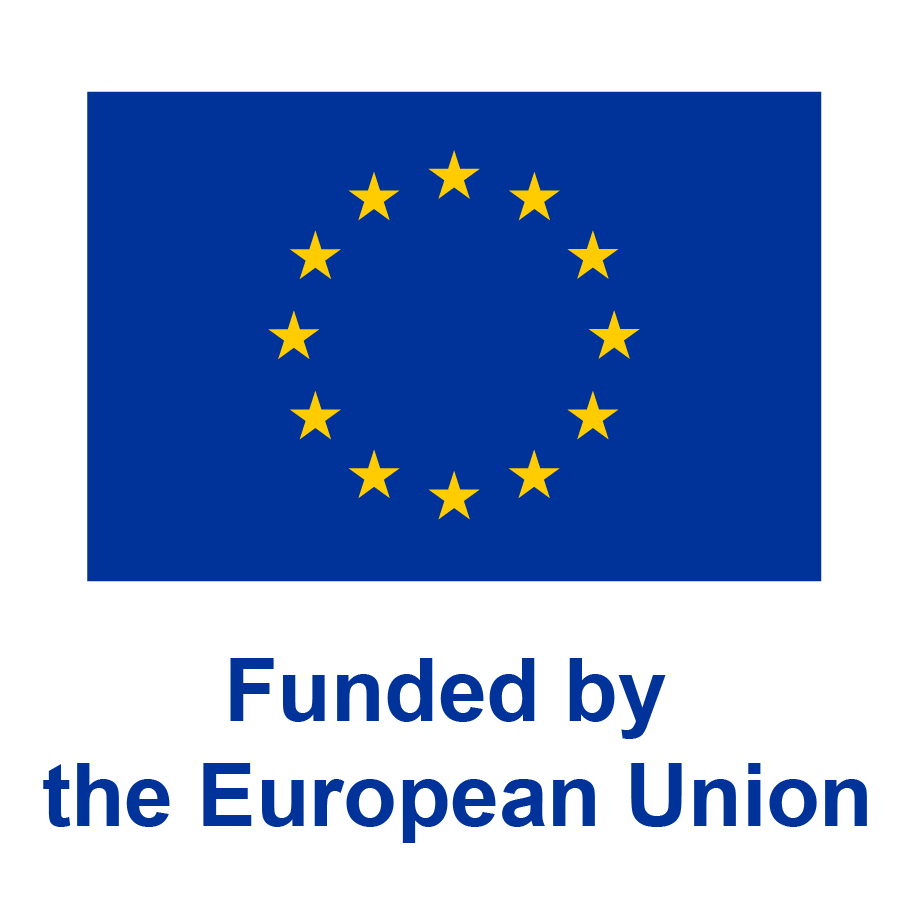}\hspace{1em}\includegraphics[width=40px]{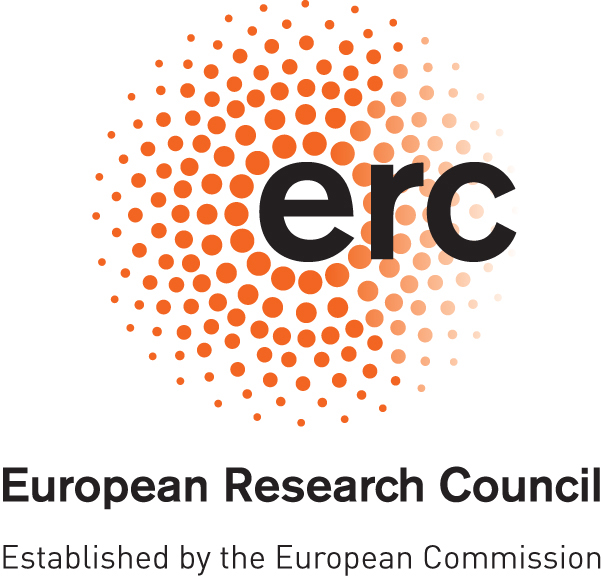}
}

\begin{document}

\title{Neighborhood Complexity and Radius-1 Merge-Width\\ in Monadically Dependent Graph Classes\footnote{\funding}}
 \author{Jan Dreier, Nikolas M{\"a}hlmann, Rose McCarty, Michał Pilipczuk, Szymon Toruńczyk}
\date{\today}
\date{}

\maketitle
\thispagestyle{empty}

\begin{abstract}
Monadic dependence is a proposed structural dividing line for fixed-parameter tractability of first-order model checking on hereditary graph classes. A graph class is \emph{monadically dependent} if the class of all graphs cannot be interpreted in its vertex-colored members using a fixed first-order formula. We prove two structural consequences of monadic dependence. First, every monadically dependent class has \emph{almost linear neighborhood complexity}: for every graph $G$ in the class and every set $A\subseteq V(G)$, the family $\{N_G(v)\cap A : v\in V(G)\}$ has size $|A|^{1+o(1)}$. Second, every $n$-vertex graph in a monadically dependent class has radius-1 merge-width $n^{o(1)}$. Here, merge-width is the decomposition parameter of Dreier and Toru\'nczyk based on construction sequences; its radius-$r$ version measures local reachability among parts through already resolved pairs. This settles the radius-1 case of the conjectured connection between monadic dependence and almost bounded merge-width and 
provides the first decomposition-based structural description of monadically dependent graph classes. Our proof is algorithmic: we give an $\mathcal{O}(n^5)$-time algorithm that, given an $n$-vertex graph $G$ such that $|\{N_G(v)\cap A : v\in V(G)\}|\le \Oh(|A|^d)$ for every $A\subseteq V(G)$, computes a construction sequence witnessing radius-1 merge-width $\mathcal{O}(n^{1-1/d}\log n)$.

\end{abstract}




\setcounter{page}{1}

\section{Introduction}\label{sec:intro}

In the \emph{first-order model checking} problem, the input 
is a graph (or other structure) and a sentence $\varphi$ of first-order logic, and the task is to decide whether \(\varphi\) is true in \(G\).
This fundamental problem captures many concrete problems of interest,
including $k$-Clique, $k$-Dominating Set, and $k$-Independent Set, and
has been the focus of decades of research aimed at understanding which structural restrictions on the input graph render the problem tractable. 

A landmark result of Grohe, Kreutzer, and Siebertz~\cite{grohe2017deciding} established that first-order model checking is fixed-parameter tractable on every \emph{nowhere dense} graph class; the running time of the algorithm is of the form $f(|\varphi|)\cdot n^{1+o(1)}$, where $f$ is a function depending on the class and the $o(1)$ term also depends on the class. For \emph{monotone} (subgraph-closed) graph classes, nowhere denseness is precisely the dividing line between (fixed-parameter) tractability and intractability~\cite{DvorakKT13-journal, grohe2017deciding}. However, there are natural tractable classes --- such as classes of bounded clique-width~\cite{courcelle2000linear} or twin-width~\cite{twwI} --- that are not monotone, contain dense graphs, and hence lie beyond the scope of this classification.

The search for the exact tractability boundary among all \emph{hereditary} (induced-subgraph-closed) graph classes has converged on a notion from Shelah's classification theory: \emph{monadic dependence}~\cite{baldwin1985second}. A graph class $\CC$ is \emph{monadically dependent} if one cannot interpret all graphs in vertex-colored graphs from $\CC$ using a fixed first-order formula. Monadic dependence precisely captures the known tractability boundaries in all settings where a complete classification exists: it is equivalent to nowhere denseness for monotone classes~\cite{adler2014interpreting}, to monadic stability for orderless classes~\cite{nevsetvril2021rankwidth}, and to bounded twin-width for classes of ordered graphs~\cite{twwIV}. This has led to the following conjecture, which is now the central open problem in the area.

\begin{conjecture}[e.g., \cite{warwick-problems,twwIV,ssmc,bddeg}]\label{conj:main}
Let $\CC$ be a hereditary class of graphs. Then the first-order model checking problem is fixed-parameter tractable on $\CC$ if and only if $\CC$ is monadically dependent.
\end{conjecture}

The hardness direction of \Cref{conj:main} was recently confirmed: first-order model checking is AW[$*$]-hard on every hereditary graph class that is not monadically dependent~\cite{flip-breakability}. (Here, AW[$*$] is a parameterized complexity class that can be understood as the parameterized analogue of~PSPACE. In particular, it is strongly believed that AW[$*$]$\neq$FPT.) Tractability has been established for increasingly general subclasses: nowhere dense classes~\cite{grohe2017deciding}, structurally nowhere dense classes~\cite{ssmc}, and monadically stable classes~\cite{stable_MC}.
Yet the full tractability direction for all monadically dependent classes remains open.

Very recently, Dreier and Toru\'nczyk~\cite{mw-mc} suggested a possible line of attack towards \cref{conj:main}. Inspired by \emph{twin-width}~\cite{twwI} and \emph{flip-width}~\cite{flipwidth}, they introduced a family of graph parameters called \emph{merge-width}, indexed by a radius parameter $r\in \N$. They conjectured that every monadically dependent class has \emph{almost bounded merge-width} (i.e., for every fixed $r$, the radius-$r$ merge-width of $n$-vertex graphs in the class is $n^{o(1)}$), and showed that first-order model checking is fixed-parameter tractable on graphs of \emph{bounded merge-width}, provided a suitable decomposition called a \emph{construction sequence} is supplied on input. A possible route towards \cref{conj:main} is then as follows:
\begin{itemize}
\item For a sufficiently high radius $r$ depending on the given sentence $\varphi$, compute a construction sequence of the input graph $G$ whose radius-$r$ merge-width is $|V(G)|^{o(1)}$.
\item Then, use the obtained construction sequence to solve the model checking problem by an extension of the method from~\cite{mw-mc}.
\end{itemize}
In this work, we advance this line of work by the following contributions:
\begin{description}
	\item[\textbf{Result I:}] We prove that monadically dependent graph classes have \emph{almost linear neighborhood complexity}, as defined below.
	\item[\textbf{Result II:}] We leverage the bound on the neighborhood complexity to prove that monadically dependent classes have almost bounded radius-$1$ merge-width. This provides a first step towards the conjecture of Dreier and Toru\'nczyk. Importantly, our proof yields a polynomial-time algorithm that computes a construction sequence of radius-$1$ width $n^{o(1)}$.
\end{description}
We now describe these results and their relevance in more detail.
%

\paragraph{Neighborhood complexity\!\!} is a 
fundamental quantitative notion used in the study of well-structured graph classes. For a graph $G$ and 
vertex set $A\subseteq V(G)$, the \emph{neighborhood complexity} of $A$ is the number of distinct sets of the form $N(v)\cap A$ for $v\in V(G)$, where $N(v)$ denotes the neighborhood of $v$.
We say that a graph class $\CC$ has \emph{almost linear neighborhood complexity} if for every $G\in\CC$ and every $A\subseteq V(G)$,
\[\Big|\set{N(v)\cap A\,:\, v\in V(G)}\Big| \leq |A|^{1+o(1)}.\]
We remark that the $o(1)$ term hides a dependence on the graph class~$\CC$.

In the language of set systems, this says that the set system of neighborhoods in graphs from $\CC$ has an almost linear shatter function, or VC density $1$.
Neighborhood complexity has played a central role in the algorithmic theory of sparse and dense graph classes alike. It is known to be almost linear for nowhere dense classes~\cite{EickmeyerGKKPRS17}, 
and more generally, all monadically stable classes~\cite{stable_MC}, and   all classes of almost bounded merge-width \cite{bonamy2025chiboundednessneighbourhoodcomplexitybounded}.  All these classes are monadically dependent.

Almost linear neighborhood complexity serves as a powerful entry point for establishing a range of quantitative structural properties.
For example, a classic result of Welzl~\cite{Welzl88} shows that every graph class with almost linear neighborhood complexity admits vertex orderings with crossing number \(n^{o(1)}\).
The key observation made in~\cite{stable_MC} was that Welzl orderings with crossing number \(n^{o(1)}\) can be used to efficiently construct \emph{sparse neighborhood covers} with overlap \(n^{o(1)}\). 
These covers serve as a key ingredient in the model checking algorithms for nowhere dense and monadically stable classes~\cite{grohe2017deciding,ssmc,stable_MC}.


Whether monadically dependent classes also have almost linear neighborhood complexity was posed as an open problem in~\cite{stable_MC}.
Our first main result answers this question affirmatively.

\begin{restatable}{theorem}{thmnc}\label{thm:nei-comp-NIP}
Let $\CC$ be a monadically dependent graph class. Then for every $G\in \CC$ and $A\subseteq V(G)$,
\[\Big|\set{N(v)\cap A\,:\, v\in V(G)}\Big| \leq |A|^{1+o(1)}.\]
\end{restatable}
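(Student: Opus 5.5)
Our plan is to argue by contradiction, combining known characterizations of monadic dependence with a density-increment style reduction. Suppose $\CC$ is monadically dependent but does not have almost linear neighborhood complexity. Then there are $\eps>0$ and graphs $G\in\CC$ with sets $A$ of arbitrarily large size $m=|A|$ such that the family $\set{N(v)\cap A : v\in V(G)}$ has at least $m^{1+\eps}$ members. We may pick one representative vertex $v$ per trace, which gives a set $B$ with $|B|\ge m^{1+\eps}$ and pairwise distinct traces on $A$. Hereditariness lets us pass to $G[A\cup B]$. The goal is to show that such bipartite-like configurations, with many more distinct traces than $|A|$, force either a large \emph{transformer}/flipped-star or a large \emph{pattern} from the known obstruction theory of monadic dependence (the characterization of Dreier, Mählmann, Toruńczyk et al.\ via flip-flatness or flip-breakability). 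Either outcome contradicts monadic dependence.

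The first step uses flip-breakability (equivalently, flip-flatness) of monadically dependent classes. For every radius $r$ there is a function $N_r$ and a constant $k_r$ such that from any large set $W$ one can perform $k_r$ flips and find a large subset $W'\subseteq W$ whose elements are pairwise far apart, or split into two large parts that are far apart. We apply this with radius $2$ to the set $A$, or to a suitable random subset of it. After a bounded number of flips, which changes each trace on $A$ by only a bounded amount of information, namely the classes of the flip partition, we obtain large parts $A_1,A_2\subseteq A$ such that no vertex is adjacent, in the flipped graph, to both $A_1$ and $A_2$. Consequently, every trace on $A_1\cup A_2$ is determined by a trace on one side, up to $2^{k}$ choices. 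The count $\pi(A)$ then satisfies a recursion of the form $\pi(A)\le c\,(\pi(A_1)+\pi(A_2))$ plus a lower-order term, where $c$ depends only on the flip parameters.

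The second step solves this recursion. The difficulty is that flip-breakability only guarantees parts of size $N^{-1}(|A|)$, which may be tiny. Iterating it therefore gives a bound of the form $|A|\cdot c^{\text{depth}}$, and we need the depth to be $o(\log |A|)$ measured in the exponent. To get balanced splits, we will instead use the stronger \emph{partition} version: a Welzl-type or ``almost-balanced separator'' statement for dependent classes, in which after $k$ flips a constant fraction of $A$ separates from another constant fraction. With balanced splits of constant-fraction size the recursion gives $\pi(m)\le m\cdot c^{O(\log m)}=m^{1+O(\log c)}$. That is still not $1+o(1)$, so a finer approach is required.

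To reach exponent $1+\eps$ for every $\eps$, we plan a bootstrapping argument. First we establish polynomial neighborhood complexity $m^d$, which holds since dependent classes have bounded VC dimension, via Sauer--Shelah. Then we show that exponent $d>1$ can be lowered. The idea is to take an extremal counterexample with $\pi(A)\ge m^{1+\eps}$ and apply a Haussler-packing / $\eps$-net argument to find many vertices whose traces agree on most of $A$ but differ in structured ways. Using Ramsey-type cleaning, these produce an arbitrarily large pattern, such as a flipped ladder/half-graph grid or a ``comparability grid'', that is interpretable in colored graphs of $\CC$ and yields all graphs. \textbf{This extraction step is the main obstacle}: converting superlinear trace counts into an unbounded obstruction. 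Our first attempt would be to show that superlinear growth yields, for every $t$, $t$ vertices of $A$ and $t^2$ vertices of $B$ realizing every pair-trace pattern on a $t\times t$ grid, modulo flips. This is exactly the ``$t$-grid'' encoding of arbitrary bipartite graphs, and it would contradict monadic dependence via the transduction of all graphs.
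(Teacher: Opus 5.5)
Your proposal has a genuine gap. The step that carries the whole theorem is the one you flag as the main obstacle, and it is left without a mechanism. The steps leading up to it also do not work as written.

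\textbf{Step 1 does not give your recursion.} Flip-breakability applied to $W=A$ only yields two subsets $A_1,A_2\subseteq A$ of size about $N_2^{-1}(|A|)$ that are far apart after $k_2$ flips. These sets do not cover $A$. So your bound on traces on $A_1\cup A_2$ says nothing about $\pi(A)$, and $\pi(A)\le c(\pi(A_1)+\pi(A_2))$ does not follow.

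\textbf{The balanced substitute is unavailable and circular.} The ``Welzl-type'' balanced partition you would substitute is not a known theorem for monadically dependent classes. Spanning paths of crossing number $n^{o(1)}$ are derived \emph{from} almost linear neighborhood complexity, so invoking them here is circular. Even granted, a constant loss per level gives only exponent $1+O(\log c)$, as you note yourself.

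\textbf{The bootstrapping target is false as stated.} ``Superlinear growth'' cannot force arbitrarily large grid-like patterns modulo flips. There are nowhere dense, hence monadically dependent, classes with superlinear neighborhood complexity. Any extraction must therefore exploit the surplus $m^{\eps}$ quantitatively. In particular it would have to reprove the $K_{t,t}$-free case, which is already the deep theorem of Eickmeyer et al.\ on nowhere dense classes (proved via generalized coloring numbers, not pattern extraction).

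\textbf{The missing idea: reduce to the sparse case.} The paper does not re-derive the sparse case; it reduces to it by induction on VC-dimension, losing only a polylogarithmic factor per step. Work with a bipartite $G=(A,B,E)$ with no twins in $B$ and VC-dimension at most $d$.
\begin{enumerate}
\item A greedy deletion of coordinates, with a harmonic-sum argument, gives $A_1\subseteq A$ on which the traces of $B$ span at least $|B|/(2\ln|A|+2)$ Hamming edges.
\item By Haussler--Littlestone--Warmuth (at most $d|\mathcal F|$ Hamming edges), at least $|B|/(d(2\ln|A|+2))$ traces are non-isolated.
\item For a fixed $a$, the $a$-positive traces, and likewise the $a$-negative ones, form a family of VC-dimension $<d$.
\item \Cref{lem:dreier-sampling}, applied to one of the two merge graphs, assigns to a $1/O(\log|A|)$ fraction of these vertices a unique label $a\in A$. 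This label is first-order definable with bounded complexity.
\end{enumerate}
After at most $d$ rounds, the number of label classes is at least half the number of surviving vertices; at the latest this happens when the classes become singletons. The label tuples then define a $K_{d+1,d+1}$-free bipartite graph transducible from $G$. \Cref{thm:weaklySparseCase} bounds its neighborhood complexity by $|A|^{1+o(1)}$, and the $(\log|A|)^{O(d)}$ losses are absorbed into the $o(1)$.
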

\noindent As usual, the $o(1)$ term hides a dependence on the graph class $\CC$. 
Equivalently, we prove that for every $\epsilon>0$ there is some $c=c(\epsilon, \CC)$ so that the cardinality above is bounded by $c|A|^{1+\epsilon}$.
This result is essentially tight as already nowhere dense classes can have superlinear neighborhood complexity\footnote{The class $\CC \coloneqq \{G \mid \mathrm{maxdegree}(G) \leq \mathrm{girth}(G)\}$ is nowhere dense but has unbounded average degree~\cite[Lem.\ 8]{pilipczuk2017sparsity-lecture1}. The class of $1$-subdivisions of graphs from $\CC$ is still nowhere dense and has superlinear neighborhood complexity.}.

The proof of the corresponding statement for monadically stable classes in~\cite{stable_MC} relies on a reduction to the nowhere dense case using tools from stability theory (Shelah's 2-rank/branching index). 
Our proof of \Cref{thm:nei-comp-NIP} has a similar global structure, but uses induction on the VC-dimension instead of the branching index. This introduces technical difficulties which we overcome using ideas concerning set systems of bounded VC-dimension, originating in the work of Sauer and Shelah \cite{sauer1972sauershelah,shelah1972sauershelah}, and Haussler, Littlestone, Warmuth \cite{Haussler1994,Haussler95} in computational learning theory, such as the unit-distance graph (also called the Hamming graph).

Combined with known results for graphs  of almost linear neighborhood complexity, \Cref{thm:nei-comp-NIP} has several immediate applications.

\begin{corollary}\label{cor:consequences}
  For every monadically dependent graph class $\CC$, every $n$-vertex graph $G\in\CC$ admits:  
  \begin{itemize}
    \item a vertex order (also called a spanning path) with crossing number $n^{o(1)}$ \cite{Welzl88},\\ which can be computed in time $(n+m)\cdot n^{o(1)}$ \cite{dreier2026nearlineartimecomputationwelzl},
    \item a neighborhood cover with overlap $n^{o(1)}$ and strong radius $2$ \cite{stable_MC},
    \item a spanner with stretch $4$ and $n^{1 + o(1)}$ edges\footnote{Take the neighborhood cover from the previous item. For each cluster $C$ with center $v$, keep for each vertex in $C$ the edges of a shortest path to $v$. See for example \cite{peleg1989spanners} for the definition of a spanner.},
    \item an $n^{o(1)}$-bit adjacency labeling scheme/implicit representation \cite{Alon2023,adj-lab},
    \item an algorithm solving All-Pairs Shortest Path in time $n^{2+o(1)}$, \\by combining  
     \cite[Thm. 6.2]{twwIII} and \cite[Lem. 13]{duraj} (see also \cite[Thm. 11]{bonnet2026fastshortestpathgraphs}).
  \end{itemize}
\end{corollary}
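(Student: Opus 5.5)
The corollary follows by feeding \Cref{thm:nei-comp-NIP} into each of the cited results, which take almost linear neighborhood complexity as their only hypothesis. The first step is a uniform reformulation. Fix $\CC$ monadically dependent. Since $\CC$ may be assumed hereditary (taking the hereditary closure preserves monadic dependence), \Cref{thm:nei-comp-NIP} says that the set system $\set{N(v)\cap A : v\in V(G)}$ on $A$ has shatter function at most $c_\epsilon m^{1+\epsilon}$ for every $\epsilon>0$. Here $m=|A|$, and the bound is uniform over all $G\in\CC$ and all $A$. In particular, the dual shatter function bound needed by Welzl-type arguments is available. The neighborhood relation is symmetric, so the dual system is the same system.

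\textbf{Crossing number.} I apply Welzl's theorem~\cite{Welzl88}: if a set system on $n$ points has shatter function $\Oh(m^{d})$, then there is a spanning path with crossing number $\Oh(n^{1-1/d}\log n)$, and for $d=1+\epsilon$ this exponent $1-1/d$ tends to $0$. I take the points to be $V(G)$ and the ranges to be the neighborhoods $N(v)$. A crossing of a path edge $uw$ by a range $N(v)$ means that $v$ distinguishes $u$ and $w$. Letting $\epsilon\to0$ gives crossing number $n^{o(1)}$. For the running time I invoke~\cite{dreier2026nearlineartimecomputationwelzl} directly. Its hypothesis is exactly a neighborhood-complexity bound, so nothing further is needed.

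\textbf{Covers and spanners.} The neighborhood cover is \cite{stable_MC}'s construction from such a Welzl order. That construction uses only the crossing-number bound, so it gives overlap $n^{o(1)}$ and strong radius~$2$. For the spanner I argue as in the footnote. For each cluster $C$ with center $v$, I keep a shortest-path tree from $v$ to every vertex of $C$. This costs at most $|C|$ edges per cluster, and the total $\sum_C|C|$ is at most $n$ times the overlap, i.e.\ $n^{1+o(1)}$. For the stretch, consider an edge $xy$ of $G$. Some cluster $C$ contains $N[x]$, and hence contains both $x$ and $y$. In the kept subgraph, $x$ and $y$ are joined through the center $v$ by a path of length at most $2+2=4$. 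So every edge has stretch at most $4$, and therefore so does every path.

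\textbf{Labeling and APSP.} Adjacency labeling with $n^{o(1)}$ bits follows from \cite{Alon2023,adj-lab}. These give labels of size $\Oh(n^{1-1/d}\log n)$, or an equivalent sublinear bound, for classes of VC-density~$d$, and I plug in $d=1+\epsilon$. APSP follows from \cite[Thm.~6.2]{twwIII} combined with \cite[Lem.~13]{duraj}, using the order of low crossing number.

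The main obstacle is bookkeeping rather than mathematics. I must check that each cited result needs only the neighborhood-complexity bound, uniformly over induced subgraphs, and does not need a stronger hypothesis such as a bounded Welzl constant in a fixed exponent. I must also check that the $o(1)$ terms compose correctly. This is handled by fixing $\epsilon$, obtaining a bound of the form $n^{f(\epsilon)}$ with $f(\epsilon)\to0$, and then letting $\epsilon\to0$.
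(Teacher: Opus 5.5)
Your proposal is correct and matches the paper's approach. The paper obtains the corollary by plugging \Cref{thm:nei-comp-NIP} into each cited result, as you do, and your symmetry remark (the dual shatter function of the neighborhood system coincides with its neighborhood complexity) together with your explicit stretch-$4$ and $\sum_C|C|\le n\cdot n^{o(1)}$ verification for the spanner are a sound spelling-out of exactly what the paper's citations and footnote leave implicit.
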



Previously, the results mentioned in \cref{cor:consequences} were known in more restrictive regimes where almost linear neighborhood complexity was settled, such as nowhere dense classes~\cite{EickmeyerGKKPRS17}, monadically stable classes~\cite{stable_MC}, and classes of almost bounded merge-width~\cite{bonamy2025chiboundednessneighbourhoodcomplexitybounded}. (The equivalence of the last one with monadic dependence on hereditary classes is the subject of the conjecture of Dreier and Toru\'nczyk~\cite{mw-mc}.)

While neighborhood complexity has rich structural and algorithmic consequences, the overarching goal --- fixed-parameter tractable first-order model checking on monadically dependent classes --- demands more.
Monadic dependence is defined by \emph{forbidding} certain patterns to be interpretable in graphs from the class, a characterization that is inherently non-constructive.
On the other hand, algorithms require \emph{decompositions}. For nowhere dense classes, this role is played by \emph{generalized coloring numbers}, which underlie most algorithmic applications. As discussed above, the working hypothesis is that the right analogue for monadically dependent classes is provided by \emph{merge-width}.

\paragraph{Merge-width\!\!}
    is a family of graph parameters introduced by Dreier and Toru\'nczyk~\cite{mw-mc} that unifies several central structural measures, including treewidth, degeneracy, twin-width, clique-width, and generalized coloring numbers. The underlying notion of a decomposition is called a \emph{construction sequence}: starting from the partition into singletons, it builds the graph by repeatedly either merging two parts, or resolving all currently unresolved vertex pairs between two parts as edges or as non-edges. The \emph{\mbox{radius-$r$} merge-width} measures how many parts any vertex can reach via a path consisting of at most $r$ resolved vertex pairs at any point during the construction.

Graph classes of \emph{bounded merge-width} --- where the radius-$r$ merge-width is bounded by a constant for each fixed $r\in \N$ --- include all classes of bounded expansion and all classes of bounded twin-width, unifying the two frameworks.
The model checking problem for first-order logic is fixed-parameter tractable on graph classes of bounded merge-width, given a witnessing construction sequence~\cite{mw-mc}.
This unifies the model checking results for bounded expansion~\cite{DvorakKT13-journal} and bounded twin-width~\cite{twwI}.

Relaxing this condition, a class $\CC$ has \emph{almost bounded merge-width} if for each fixed $r\in \N$, the $n$-vertex graphs in $\CC$ have radius-$r$ merge-width at most $n^{o(1)}$ (where the ${o(1)}$ term depends on both the radius $r$ and the class $\CC$). This parallels the characterization of nowhere dense classes via almost bounded generalized coloring numbers. It is conjectured~\cite{mw-mc} that almost bounded merge-width coincides with monadic dependence for hereditary classes.
One direction holds: almost bounded merge-width implies monadic dependence~\cite{flip-breakability}. The other direction is open, however our second main result settles it for $r=1$:

\begin{theorem}\label{thm:main}
Let $\CC$ be a monadically dependent graph class.
Then every $n$-vertex graph $G\in \CC$ has radius-$1$ merge-width at most $n^{o(1)}$.
\end{theorem}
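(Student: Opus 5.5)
The plan is to deduce \Cref{thm:main} from \Cref{thm:nei-comp-NIP} together with the algorithmic statement announced in the abstract. That statement says: if an $n$-vertex graph $G$ satisfies $|\set{N(v)\cap A : v\in V(G)}|\le \Oh(|A|^d)$ for all $A\subseteq V(G)$, then $G$ has a construction sequence of radius-$1$ merge-width $\Oh(n^{1-1/d}\log n)$. Given $\eps>0$, \Cref{thm:nei-comp-NIP} lets us choose $d=1+\eps$ with constant $c(\eps,\CC)$. We then obtain width $\Oh(n^{\eps/(1+\eps)}\log n)\le n^{2\eps}$ for large $n$, which is $n^{o(1)}$. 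So the real work is proving the algorithmic statement, and I would organize it as follows.

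\textbf{Step 1 (ordering with low crossing number).} Apply a Welzl/Haussler-type argument to the set system $\set{N(v) : v\in V(G)}$ on ground set $V(G)$. The polynomial shatter function bound with exponent $d$ yields a linear order (a spanning path) $v_1,\dots,v_n$ of $V(G)$ in which every neighborhood $N(u)$ crosses the path only $\Oh(n^{1-1/d}\log n)$ times. Equivalently, for every vertex $u$, the sequence of adjacencies to $u$ along the order changes value only that many times. This is the classical multiplicative-weights/packing proof. It is polynomial, and it is where the $\log n$ factor comes from.

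\textbf{Step 2 (building a construction sequence from the order).} Process the vertices in order, keeping a single growing part $P_i=\set{v_1,\dots,v_i}$ together with singletons for the rest. To avoid the blow-up caused by resolving everything against one huge part, I would instead use a hierarchical (binary-tree) merging of intervals of the order. Each interval $I$ is resolved against an outside part $Q$ only once the pairs between $I$ and $Q$ are homogeneous. The crossing-number bound says that each vertex $u$ sees only $\Oh(n^{1-1/d}\log n)$ maximal intervals on which its adjacency is constant. So we can schedule the resolutions as follows: whenever a dyadic interval becomes homogeneous towards $u$, resolve it against $u$'s current part, and delay the remaining pairs until later merges make them homogeneous or they are resolved at singleton level.

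\textbf{Step 3 (bounding radius-$1$ width).} The radius-$1$ width counts the parts reachable from a vertex by one resolved pair, plus the part containing it. For a vertex $u$, the parts it has resolved pairs to are intervals. Within each maximal homogeneous run of $u$'s adjacency, only $\Oh(\log n)$ dyadic intervals are needed at any time. Summing over the runs would give $\Oh(n^{1-1/d}\log^2 n)$. Avoiding the extra log requires care, but that refinement is irrelevant for the $n^{o(1)}$ conclusion.

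The main obstacle is Step 2. The vertices $u$ are themselves being merged, and a merged part $Q$ needs homogeneity towards $I$ simultaneously for all its members. If I cannot control this, I would fall back on a symmetric scheme: merge in the same dyadic fashion on both sides, and resolve an $I\times J$ block once it is homogeneous. Its width is then charged to the crossing number of the vertices in $J$ with respect to the boundaries of $I$, using the fact that only $\Oh(\log n)$ levels are active at any moment.
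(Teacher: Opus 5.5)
Your opening reduction is exactly the paper's proof of this theorem: combine \Cref{thm:nei-comp-NIP} (exponent $1+\eps$, constant $c(\eps,\CC)$) with the algorithmic \Cref{thm:mergeWidth}. Your exponent $\eps/(1+\eps)$ is correct, and with \Cref{thm:mergeWidth} taken as given the argument is complete. However, you then treat that algorithmic statement as the real work and try to prove it through a low-crossing spanning path. That attempt has a genuine gap, located exactly at Step~2.

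A path of crossing number $k$ bounds the number of runs in each \emph{row} of the adjacency matrix. It does not make the parts you merge (consecutive vertices, sibling intervals) have similar neighborhoods, except on average, since $\sum_j|N(v_j)\triangle N(v_{j+1})|=\sum_u \mathrm{cr}(N(u))$. Yet before two parts merge, the unresolved pairs from their union to every other part must be homogeneous. Moreover, the width must be bounded for \emph{every} vertex of the merged parts. Your fallback charges each inhomogeneous block to the vertex whose neighborhood crosses, which is the wrong side.

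Concretely, let $v_1$ be adjacent to $v_3,\dots,v_n$ and let there be no other edges. The order $v_1,\dots,v_n$ has crossing number $1$. But $\{v_1\},\{v_2\}$ are bottom siblings, and $\{v_1,v_2\}$ is inhomogeneous towards every nonempty $J\subseteq\{v_3,\dots,v_n\}$. So for each $j\ge 3$, one of $v_1v_j$, $v_2v_j$ must be resolved before this merge. In a level-by-level dyadic schedule the $v_j$ then occupy at least $(n-2)/2$ parts. Hence $v_1$ or $v_2$ reaches $\Omega(n)$ parts, although each $N(v_j)$ is charged only once.

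Step~3 has the same blind spot. It counts resolutions initiated from $u$'s own runs, but ignores those in which $u$'s part is the passive side of a resolution made for someone else. The path does bound the \emph{total} number of homogeneous blocks needed, by $\Oh(nk\log n)$, but not how many of them meet a single vertex. Choosing which parts to merge so that a per-vertex bound holds is precisely what your sketch leaves open.

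The paper never passes through a path; it runs Welzl's reweighting directly on the merges. Every part keeps a leader. In round $i$ one picks leaders $u_i,v_i$ that are $3r(n-i)^{-1/d}$-fractional twins simultaneously for the current weights $w_i$ and for uniform weights. This uses \Cref{lem:existFractionalTwins}, proved via Haussler's packing lemma on a weighted blow-up, together with \Cref{lem:twoWeightFunctions}. Then $\Pp_i(v_i)$ is resolved against exactly the parts whose leaders lie in $\Delta(u_i,v_i)$. It is merged into the larger part $\Pp_i(u_i)$, whose leader $u_i$ is kept, and the weights of leaders in $\Delta(u_i,v_i)$ are doubled.

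The invariant ``unresolved pairs between different parts have the same adjacency as the corresponding leaders'' replaces your requirement that a whole part be homogeneous towards an interval. A vertex lies in the moving part at most $\log_2 n$ times, each time touching at most $3rn^{1-1/d}$ parts. Its leader lies in some $\Delta(u_j,v_j)$ (the passive case) only $\Oh(n^{1-1/d}+\log n)$ times, by the multiplicative-weight count. Together these give width $\Oh(n^{1-1/d}\log n)$. Adaptively merging weighted near-twins, rather than following a fixed order, is the idea your sketch is missing.
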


Our techniques are algorithmic, not just existential, and they apply to all graphs, not just graphs from monadically dependent classes. Our algorithm returns a construction sequence of small radius-$1$ width for any graph with polynomial neighborhood complexity. More precisely, we prove the following theorem. We note that the value of the constant $k$ in our theorem primarily comes from the constant in Haussler's Packing Lemma~\cite{Haussler95}.

\begin{restatable}{theorem}{mergeWidth}
\label{thm:mergeWidth}
    There is an algorithm that, given an $n$-vertex graph $G$,  computes in time $\mathcal{O}(n^5)$ a construction sequence for $G$ with the following guarantee:
    For all real numbers $c$ and $d\geq 1$, there exists an integer $k = k(c,d)$ so that if $G$ is an $n$-vertex graph such that for every nonempty $A \subseteq V(G)$,
    \[\Big|\set{N(v)\cap A\,:\, v\in V(G)}\Big| \leq c|A|^d,\]
    then the construction sequence has radius-1 merge-width at most $k\cdot n^{1-1/d} \log{n}$.
\end{restatable}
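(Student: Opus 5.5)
We plan to build the construction sequence from a low-crossing ordering/partition of the vertices, using the Haussler Packing Lemma to control how many distinct adjacency patterns a vertex can see across parts. We first recall the mechanics of radius-1 merge-width. At any moment there is a partition of $V(G)$ into parts and a set $R$ of resolved pairs. The radius-1 width at that moment is the maximum, over vertices $v$, of the number of parts containing a vertex $u$ with $u=v$ or $uv\in R$. Resolving two parts $P,Q$ adds all pairs between them. The plan is to proceed in rounds, halving the number of parts each round (a hierarchical clustering), as in Welzl-type constructions.

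\textbf{Step 1 (clustering via packing).} Consider the set system $\{N(v): v\in V(G)\}$ with pseudometric $\delta(u,v)=|N(u)\triangle N(v)|$. Polynomial neighborhood complexity $c|A|^d$ bounds the primal shatter function by $c m^d$. Haussler's Packing Lemma (in its shatter-function form) then says that any $\delta$-separated family with separation $>t$ has size $O((n/t)^d)$. Hence, for each scale $t$, a greedy maximal $t$-separated net yields a partition of the current vertex set into at most $s_t=O((n/t)^d)$ clusters of $\delta$-diameter $\le 2t$. Choosing $t\approx n\cdot s^{-1/d}$ relates the number of parts $s$ to the diameter $t\approx n s^{-1/d}$. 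The algorithm computes all $\delta(u,v)$ in $O(n^3)$ time and runs greedy nets at $O(\log n)$ scales. Since each merging step is polynomial, the total is within $O(n^5)$.

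\textbf{Step 2 (the sequence).} Process scales from fine to coarse, $s=n, n/2, n/4,\dots$. When passing to scale $s$ we merge parts into the new clusters, arranging the nets to be nested by refining greedily inside the previous clusters. Before merging two parts $P,Q$, we resolve every pair of parts $(X,Y)$ whose adjacency is already homogeneous, i.e.\ complete or anticomplete. Between two clusters of diameter $\le t$, a vertex $x$ is inhomogeneous toward $Y$ only if some $y,y'\in Y$ differ on $x$. We instead resolve a pair of parts as soon as it becomes homogeneous, and resolve all remaining pairs at the very end when there is one part. Correctness is automatic, since every pair is eventually resolved with its true adjacency.

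\textbf{Step 3 (width bound) — the main obstacle.} We must bound, for each $v$ and each moment, the number of current parts $Y$ such that $v$ has a resolved pair into $Y$. Resolved pairs of $v$ go to parts that were homogeneous with $v$'s part at some earlier scale; after merging, a part $Y$ reachable from $v$ contains a vertex adjacent (resolved) to $v$. Counting these requires charging each reachable part at scale $s$ either to a part homogeneous with $v$'s current cluster, or to a "crossing": a boundary where $N(v)$ or a near-twin of $v$ splits a cluster. The intended bound is that a vertex set of $\delta$-diameter $2t$ meets inhomogeneously at most $O(t)$ vertices' worth of clusters. Combined with $s$ clusters, this gives at most about $s\cdot t/n + \log$ overhead parts per vertex, i.e.\ $O(s\cdot s^{-1/d})=O(s^{1-1/d})\le O(n^{1-1/d})$, with an extra $\log n$ from summing over the $\log n$ levels.

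I expect the delicate part to be turning the diameter bound into a per-vertex count of inhomogeneous clusters. The diameter bound is an averaged, symmetric-difference statement, whereas a single vertex can see many partially adjacent clusters. If the direct charge fails, I would first try a Welzl-style reweighting: choose nets iteratively, doubling weights on the neighborhoods $N(v)\cap A$ that are crossed often. This guarantees every $v$ crosses only $O(s^{1-1/d}\log n)$ of the clusters, which is exactly the claimed $k\cdot n^{1-1/d}\log n$ bound with $k$ depending on $c,d$ through the packing constant.
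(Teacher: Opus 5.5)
There is a genuine gap, and it starts with the mechanics in Step~2. A resolve step between parts $X,Y$ declares \emph{all} currently unresolved pairs between them to be edges, or all of them to be non-edges. Pairs cannot be resolved individually ``with their true adjacency''. A valid sequence must therefore keep, at every moment, the unresolved pairs between any two parts (and inside any part) all edges or all non-edges. Every merge then forces you to resolve \emph{in advance} those pairs of parts whose unresolved pairs would otherwise become mixed.

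Your rule runs the other way. You leave the inhomogeneous pairs until there is a single part $V$, so the final resolution of $(V,V)$ gives one status to a mixed set of pairs; ``correctness is automatic'' is false. Read literally, before the first merge every pair of singletons is homogeneous, so you would resolve all of $\binom{V}{2}$ at once and get width $n$. (If individual resolution were allowed, merging everything first would give width $1$ and the theorem would be trivial.) The width is governed exactly by the forced resolutions, and Step~3 never counts them.

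Step~3 is also not yet an argument. Take a cluster $C$ with center $c_C$. The vertices distinguishing two members of $C$ form $\bigcup_{y\in C}(N(y)\triangle N(c_C))$, which can have size $|C|\cdot t$. So the ``$O(t)$'' charge is unjustified, and $s\cdot t/n$ is at best an average over vertices. Moreover, your distances are measured on all of $V(G)$. Even with one representative per part, a vertex whose representative changes inside a cluster of diameter $2t\approx 2ns^{-1/d}$ is only guaranteed to need resolution against at most $\min(s,2t)$ parts. For $s\approx n^{d/(d+1)}$ this is polynomially larger than $n^{1-1/d}$. The reweighting fallback is only asserted.

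The paper's proof uses instead the following ideas.
\begin{itemize}
\item It keeps a leader $L(P)$ in each part, with the invariant that every unresolved cross-part pair $\{x,y\}$ has the adjacency of $\{L(x),L(y)\}$.
\item It merges only two parts $\mathcal{P}(u),\mathcal{P}(v)$ per round. It resolves just $\mathcal{P}(v)$ against the parts whose leaders lie in $\Delta_{G_i}(u,v)$, computed in the \emph{leader graph} $G_i$ on the $n-i$ current leaders.
\item It chooses $u,v$ as fractional twins simultaneously for two weights. For the uniform weight this gives $|\Delta_{G_i}(u,v)|\le 3rn^{1-1/d}$ via Haussler's lemma on $G_i$. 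For weights doubled on $\Delta$ each round, it gives that each vertex's leader lies in $\Delta$ in only $O(n^{1-1/d}+\log n)$ rounds.
\item It lets the leader of the larger part survive, so each vertex's leader changes at most $\log_2 n$ times.
\end{itemize}
A vertex then reaches at most $3rn^{1-1/d}\log_2 n$ parts through rounds in which its part is absorbed. It reaches one further part per round in which its leader is in $\Delta$. Together this gives $k\,n^{1-1/d}\log n$.
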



If $G$ is drawn from a monadically dependent class $\CC$,
then \Cref{thm:nei-comp-NIP} guarantees almost linear neighborhood complexity,
and the algorithm computes a construction sequence with \mbox{radius-1} merge-width $n^{o(1)}$, implying \Cref{thm:main}.
Similarly, if $G$ belongs to a class $\CC$ of bounded merge-width, then $G$ has linear neighborhood complexity by~\cite{bonamy2025chiboundednessneighbourhoodcomplexitybounded}, and so the algorithm returns a construction sequence of radius-1 merge-width $\mathcal{O}(\log{n})$ (hiding a constant depending on the class). Notably, the algorithm is oblivious to the values of $d$ and $c$ and its running time is always just~$\mathcal{O}(n^5)$. \Cref{thm:mergeWidth} itself is obtained by a greedy procedure with an iterative reweighing technique similar to the one used by~Welzl~\cite{Welzl88}. 

We did not optimize the running time $\mathcal{O}(n^5)$ of the algorithm. 
A construction sequence 
of radius-1 merge-width $\mathcal{O}(\log{n})$ gives rise to a very particular \emph{signed tree model} with $\mathcal{O}(n\log{n})$ transversal pairs. For graphs of bounded twin-width, signed tree models with $\mathcal{O}(n\log{n})$ transversal pairs can also be constructed in randomized time $\Oh(m+n)\log n$ \cite[Thm. 4]{bonnet2026fastshortestpathgraphs}. 
We leave for future research the question of whether construction sequences of width $\Oh(\log n)$ can be obtained in a similar time for all graphs of linear neighborhood complexity.


\section{Neighborhood Complexity}
\label{sec:neighborhoodComplexity}

In this section we give a high-level overview of the proof of \Cref{thm:nei-comp-NIP}. The complete proof can be found in \Cref{apn:nc}. We begin by introducing the relevant notions. 

\paragraph{Transductions and monadic dependence}
\label{sec:prelims}


A \emph{transduction} $T$ is specified by a number $k$ and a first-order formula $\phi(x,y)$ in the signature consisting of a binary adjacency symbol and $k$ unary relation symbols.
Given a graph $G$, define $T(G)$ as the set of graphs $H$ which can be obtained as follows: first expand $G$ by interpreting the~$k$ unary predicates arbitrarily; then construct a graph on vertex set $V(G)$ and edges $uv$ with $u\neq v$ such that $\phi(u,v)\lor\phi(v,u)$ holds in the expanded structure; finally, take an arbitrary induced subgraph $H$ of the resulting graph.

For example, the two formulas
$\phi_1(x,y) \coloneqq \neg E(x,y)$ and 
$\phi_2(x,y) \coloneqq E(x,y) \vee \exists z\,(E(x,z) \wedge E(z,y))$
specify (with $k=0$) transductions producing the complement and the square of a graph, respectively.

For a graph class $\CC$, we define $T(\CC)\coloneqq \bigcup \setof{T(G)}{G\in\CC}$.
If $\CC$ and $\DD$ are graph classes, then $\CC$ \emph{transduces} $\DD$ if $\DD\subseteq T(\CC)$ for some transduction $T$. A graph class is \emph{monadically dependent} if it does not transduce the class of all graphs. 
(This is the transduction formulation of monadic dependence; the original definition is model-theoretic, and the equivalence follows from the work of Baldwin and Shelah~\cite{baldwin1985second}.)
Since transducibility is transitive (by composing the defining first-order formulas and unary expansions), if $\CC$ is monadically dependent and $\CC$ transduces $\DD$, then also $\DD$ is monadically dependent. 

It is well-known that if $\CC$ is a monadically dependent graph class
that does not contain $K_{t,t}$ as a subgraph, for some $t$,
then $\CC$ is \emph{nowhere dense}.
This follows e.g. by combining the results of 
\cite{adler2014interpreting} and~\cite{dvovrak2018induced}; see \cite[Lemma~35]{flipping-and-forking} or \cite[Lemma~13.7]{maehlmann-thesis} for a direct argument. 
The definition of nowhere denseness will not be relevant here, only the fact that such classes are known to have almost linear neighborhood complexity \cite{EickmeyerGKKPRS17}.
Combining these results, we obtain  the following corollary.

\begin{corollary}
\label{thm:weaklySparseCase}
Fix $t\in\N$ and a monadically dependent class $\CC$ of graphs. Then for every $G \in \CC$ that does not contain $K_{t,t}$ as a subgraph and every $A \subseteq V(G)$,
\[\Big|\set{N(v)\cap A\,:\,v\in V(G)}\Big| \leq 
|A|^{1+o(1)},\]
where the $o(1)$ term depends on both $t$ and $\CC$.
\end{corollary}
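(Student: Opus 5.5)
The plan is to combine the two facts quoted just before the statement. Fix $t$ and the monadically dependent class $\CC$, and let $\CC_t \subseteq \CC$ be the subclass of graphs in $\CC$ that contain no $K_{t,t}$ as a subgraph. Since every graph of $\CC_t$ belongs to $\CC$, the class $\CC_t$ does not transduce the class of all graphs: any transduction $T$ with $T(\CC_t)$ equal to all graphs would also give $T(\CC)$ equal to all graphs. Hence $\CC_t$ is monadically dependent. (One could take its hereditary closure if needed; this stays $K_{t,t}$-subgraph-free and is still contained in the transductions of $\CC$, since induced subgraphs are built into transductions.)

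Next, invoke the fact stated above, namely \cite{adler2014interpreting} combined with \cite{dvovrak2018induced}, or directly \cite[Lemma~35]{flipping-and-forking}. A monadically dependent class excluding $K_{t,t}$ as a subgraph is nowhere dense. Applied to $\CC_t$, this shows that $\CC_t$ is nowhere dense.

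Finally, apply the result of \cite{EickmeyerGKKPRS17}: nowhere dense classes have almost linear neighborhood complexity. Concretely, for every $\epsilon>0$ there is a constant $c(\epsilon,\CC_t)$ such that for every $G\in\CC_t$ and every $A\subseteq V(G)$ we have $|\set{N(v)\cap A : v\in V(G)}|\le c|A|^{1+\epsilon}$. Since $\CC_t$ is determined by $t$ and $\CC$, the $o(1)$ term depends only on $t$ and $\CC$, as claimed.

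There is no real obstacle here. The only point needing care is that the cited results are phrased for classes rather than individual graphs, which is why we pass to the subclass $\CC_t$, and that monadic dependence is inherited by subclasses.
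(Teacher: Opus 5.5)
Your proposal is correct and follows the same route as the paper: the paper also obtains this corollary by combining the known fact that a monadically dependent class excluding $K_{t,t}$ as a subgraph is nowhere dense (via \cite{adler2014interpreting} and \cite{dvovrak2018induced}, or \cite[Lemma~35]{flipping-and-forking}) with the almost linear neighborhood complexity of nowhere dense classes from \cite{EickmeyerGKKPRS17}. You explicitly pass to the subclass $\CC_t$ and note that it inherits monadic dependence. The paper leaves this step implicit.
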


\paragraph{Set systems and VC-dimension}
  A \emph{set system} on a set $D$ is a set $\cal F$ of subsets of $D$.  
  If $\cal F$ is a set system on $D$
  and $X\subset D$,
  then by $\cal F\res X$ we denote the set system $\setof{X\cap F}{F\in\cal F}$ on $X$.
  The \emph{VC-dimension} of $\cal F$ is the maximal size 
  of a set $X$ such that $\cal F\res X=2^X$, and is $-\infty$ if $\cal F=\emptyset$.
For a graph $G$, 
we write $\VCdim(G)$ 
to denote the VC-dimension of the set system $\setof{N(v)}{v\in V(G)}$ on $V(G)$.
A graph class $\CC$ has \emph{bounded} VC-dimension if there exists $d \in \N$ such that $\VCdim(G) \leq d$ for all $G\in \CC$.
It is easy to see that every monadically dependent graph class has bounded VC-dimension.



Let $\cal F$ be a set system on $V$, and let $v\in V$.
We say that $A$ and $B$ are a \emph{$v$-pair}
if $A,B\in\cal F$ and $A\triangle B=\set{v}$, that is, if $v$ is the unique element on which $A$ and $B$ differ.
If $A\in\cal F$ is in some $v$-pair then $A$ is \emph{$v$-positive} 
if $v\in A$ and is \emph{$v$-negative} otherwise. The following observation  underlies the inductive proof of the fundamental Sauer-Shelah-Perles lemma (see for instance \cite[Lemma~5.9]{discrepancy}). We leverage this lemma in our proof of \Cref{thm:nei-comp-NIP}. (Claims marked with \inAppendix{} are proved in the~appendix.)

\begin{restatable}[\inAppendix]{lemma}{lemVCind}\label{lem:VC-induction}
  For every nonempty set system $\cal F$ on a set $V$ and every  $v\in V$,
  the following two set systems on~$V$ 
  have VC-dimension strictly smaller than the VC-dimension of $\cal F$:
  \begin{align*}
    \setof{F\in\cal F}{\text{$F$ is $v$-positive}}
    \quad
    \text{and}
    \quad
    \setof{F\in\cal F}{\text{$F$ is $v$-negative}}.
  \end{align*}  
\end{restatable}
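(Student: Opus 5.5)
The plan is to show directly that a shattered set for either subsystem can be enlarged by $v$ and remains shattered by $\cal F$. Let $\cal F^+$ denote the set of $v$-positive members of $\cal F$ and $\cal F^-$ the set of $v$-negative ones.

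First I handle the degenerate cases. If $\cal F^+$ is empty, its VC-dimension is $-\infty$. Since $\cal F$ is nonempty, its VC-dimension is at least $0$, because $\emptyset$ is always shattered. So the strict inequality holds, and the same argument covers $\cal F^-$. I will also note that $\cal F^+$ and $\cal F^-$ are in bijection via $F\mapsto F\triangle\set{v}$, since every $v$-positive $F$ has its partner $F\setminus\set{v}\in\cal F$, which is $v$-negative, and conversely. Moreover, $F$ and $F\triangle\set{v}$ agree on $V\setminus\set{v}$.

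The key step is the following. Suppose $X\subseteq V$ is shattered by $\cal F^+$ (or by $\cal F^-$).

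First, $v\notin X$. Indeed, every member of $\cal F^+$ contains $v$, so the trace $\emptyset$ on $X$ is not realized if $v\in X$. Symmetrically, no member of $\cal F^-$ contains $v$, so the trace $\set{v}$ is not realized.

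Next, I claim that $X\cup\set{v}$ is shattered by $\cal F$. Take any $Y\subseteq X$ and pick $F\in\cal F^+$ (or $\cal F^-$) with $F\cap X=Y$. Let $F'=F\triangle\set{v}$, which lies in $\cal F$ by the pairing. Since $v\notin X$, both $F$ and $F'$ have trace $Y$ on $X$, while exactly one of them contains $v$. Hence both $Y$ and $Y\cup\set{v}$ are realized as traces of members of $\cal F$ on $X\cup\set{v}$. It follows that $\VCdim(\cal F)\geq |X|+1$, which gives the strict inequality.

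I do not expect any real obstacle. The only care needed is the $-\infty$ convention when a subsystem is empty, and the check that $v\notin X$. That check is what guarantees that adding $v$ genuinely increases the size of the shattered set.
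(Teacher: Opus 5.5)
Your proof is correct and is essentially the paper's argument. A set $X$ shattered by the $v$-positive (or $v$-negative) subfamily cannot contain $v$, and pairing each witness $F$ with its partner $F\triangle\set{v}\in\cal F$ shows that $\cal F$ shatters $X\cup\set{v}$; your explicit handling of the empty subfamily and of the $v$-negative case just spells out what the paper calls trivial or symmetric.
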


\newcommand{\hamming}{\mathrm{Hamming}}
  

The \emph{Hamming graph} of a set system $\cal F$, denoted as $\hamming(\cal F)$, is 
  the graph with vertex set $\cal F$, where two sets 
  $A,B\in\cal F$ are adjacent if and only if $|A\triangle B|=1$.
We use a lemma due to Haussler, Littlestone, and Warmuth \cite[Lemma 2.4]{Haussler1994} (see also \cite[Lemma 2]{Haussler95}). 
\begin{lemma}\label{lem:hamming-deg}
  The Hamming graph of a set system $\cal F$ of VC-dimension $d$ has at most $d|\cal F|$ edges.
\end{lemma}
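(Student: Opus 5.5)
The plan is the classical shifting/compression argument of Haussler, Littlestone and Warmuth. I will show that the edge density $|E(\hamming(\cal F))|/|\cal F|$ does not decrease under a down-shift, while VC-dimension does not increase, and then finish on a down-closed family. Throughout, I assume $\cal F$ is a finite set system on a finite ground set $V$. This is the only case used; the general case of finitely many sets reduces to it by restricting to the finitely many coordinates on which the sets differ, which preserves both the Hamming graph and the VC-dimension.

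\textbf{Step 1 (shift operator).} For $v\in V$, define $S_v(\cal F)$ by replacing each $F\ni v$ with $F\setminus\{v\}$ whenever $F\setminus\{v\}\notin\cal F$, and keeping all other sets. This map is injective, so $|S_v(\cal F)|=|\cal F|$.

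\textbf{Step 2 (VC-dimension does not increase).} This is the standard argument. Suppose $X$ is shattered by $S_v(\cal F)$. If $v\notin X$, the traces on $X$ are unchanged, so $X$ is shattered by $\cal F$. If $v\in X$, then for every $Y\subseteq X$ with $v\in Y$, a set of $S_v(\cal F)$ with trace $Y$ contains $v$. Hence it was not shifted, so both it and its copy without $v$ lie in $\cal F$. This gives traces $Y$ and $Y\setminus\{v\}$ in $\cal F$, so $\cal F$ shatters $X$.

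\textbf{Step 3 (edge count does not decrease).} This is the step I expect to be the main obstacle. I will pair up the sets of $\cal F$ by their projections $F\setminus\{v\}$. Edges of direction $v$, namely $v$-pairs, are exactly the pairs sharing a projection. Shifting preserves the set of projections and the number of sets above each projection, so the number of $v$-edges is unchanged.

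For an edge $\{A,A\triangle\{u\}\}$ with $u\neq v$, I will look at the 4-cycle $\{A_0,A_0\cup\{v\},A_0\cup\{u\},A_0\cup\{u,v\}\}$ with $v\notin A_0$. The task is to check that the number of $u$-edges among the members of $\cal F$ inside this square does not decrease under $S_v$. I will do this by case analysis on which of the four sets are present. Shifting only moves a set from the $v$-top to the $v$-bottom level when the bottom spot is empty. If exactly one of the two $u$-adjacent tops is shifted down, an edge at the top may be lost, but the other top cannot then have its bottom occupied in a way that destroys an edge. Tabulating the cases shows the count among $\le 4$ sets never drops. Summing over all squares and directions gives $|E(\hamming(S_v\cal F))|\ge|E(\hamming(\cal F))|$.

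\textbf{Step 4 (termination and down-closed case).} Each nontrivial shift strictly decreases $\sum_{F}|F|$, so repeated shifting terminates in a family $\cal F^*$ with $S_v(\cal F^*)=\cal F^*$ for all $v$. Such a family is down-closed. By Step 2, $\cal F^*$ has VC-dimension at most $d$, so each member of $\cal F^*$ has size at most $d$: every member is itself shattered, by down-closure.

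I will then orient each edge $\{F\setminus\{u\},F\}$ of $\hamming(\cal F^*)$ towards its larger endpoint $F$. Every edge is counted at most once per larger endpoint $F$, and $F$ receives at most $|F|\le d$ edges. Hence $|E(\hamming(\cal F^*))|\le d|\cal F^*|=d|\cal F|$. Combined with Step 3, this yields $|E(\hamming(\cal F))|\le d|\cal F|$.
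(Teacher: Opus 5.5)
Your proof is correct. The paper gives no proof of this lemma; it only cites Haussler--Littlestone--Warmuth [Lemma 2.4] and Haussler [Lemma 2]. What you wrote is the classical down-shifting (compression) argument from that literature. The one step you assert rather than carry out is the case check in Step~3, and the sentence describing it does not quite parse. It can be done in one line. $S_v$ acts on each $v$-fiber, with presence indicators $(x,y)$ for (bottom, top), by $(x,y)\mapsto(x\vee y,\,x\wedge y)$. Your square over $A_0$ is the union of two $v$-fibers, with indicators $(x_1,y_1)$ and $(x_2,y_2)$. Its number of $u$-edges therefore changes from $[x_1\wedge x_2]+[y_1\wedge y_2]$ to $[(x_1\vee y_1)\wedge(x_2\vee y_2)]+[x_1\wedge y_1\wedge x_2\wedge y_2]$, which is never smaller.

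The other standard proof is a direct induction on $|V|$. It needs nothing beyond the paper's own Lemma~\ref{lem:VC-induction}. Fix $v\in V$, let $\mathcal F'=\mathcal F\res{V\setminus\{v\}}$, and let $\mathcal F_-$ be the family of $v$-negative sets. Then:
\begin{itemize}
\item $|\mathcal F|=|\mathcal F'|+|\mathcal F_-|$;
\item exactly $|\mathcal F_-|$ edges of $\hamming(\mathcal F)$ have direction $v$;
\item every other edge projects to an edge of $\hamming(\mathcal F')$, and each edge of $\hamming(\mathcal F')$ has at most two preimages, two only when it is an edge of $\hamming(\mathcal F_-)$.
\end{itemize}
Since $\VCdim(\mathcal F')\le d$ and $\VCdim(\mathcal F_-)\le d-1$ by Lemma~\ref{lem:VC-induction}, induction gives $|E(\hamming(\mathcal F))|\le d|\mathcal F'|+(d-1)|\mathcal F_-|+|\mathcal F_-|=d|\mathcal F|$.

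This route is shorter and avoids both your termination argument and the square-by-square case analysis. It would also make the paper self-contained here at almost no cost. Your compression route is longer but gives more structure. It reduces the question to down-closed families, where the bound becomes the transparent in-degree count, and the same shift operator also yields Sauer--Shelah and related extremal bounds.
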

\begin{corollary}\label{cor:hamming}
  Let $\cal F$ be a set system of VC-dimension $d$, whose Hamming graph $H$ has $m$ edges. Then $H$ has at least $m/d$ non-isolated vertices.
\end{corollary}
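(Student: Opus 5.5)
The plan is to deduce the corollary directly from \Cref{lem:hamming-deg} by counting edge endpoints. Let $H=\hamming(\cal F)$, let $m$ be its number of edges, and let $I$ be the set of non-isolated vertices of $H$. The goal is to show $|I|\ge m/d$.

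The first step is to restrict to $I$. Set $\cal F'=I\subseteq\cal F$. Deleting isolated vertices removes no edges, so $\hamming(\cal F')$ is exactly $H$ with its isolated vertices deleted, and it still has $m$ edges. Passing to a subfamily cannot increase VC-dimension, since $\cal F'\res X\subseteq\cal F\res X$ for every $X$. Hence $\VCdim(\cal F')\le d$.

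The second step applies \Cref{lem:hamming-deg} to $\cal F'$. This requires the lemma in its monotone form: a family of VC-dimension at most $d$ has a Hamming graph with at most $d|\cal F'|$ edges. That form follows from the stated version, because the stated bound $\VCdim(\cal F')\cdot|\cal F'|$ is at most $d|\cal F'|$. The lemma then gives $m\le d|\cal F'|=d|I|$, so $|I|\ge m/d$, as required.

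Two degenerate cases need a brief check.
\begin{itemize}
\item If $m=0$ the bound $|I|\ge 0$ is trivial.
\item If $d\le 0$, the statement should be read as vacuous or as handled separately. A family of VC-dimension $0$ contains no two sets differing in a single element, because such a pair would shatter that element. So $m=0$ whenever $d=0$, and the claim is interpreted as trivially true.
\end{itemize}

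The only point needing any care, and it is minor, is the restriction step. One has to make sure that passing to the non-isolated subfamily keeps all $m$ edges and does not raise the VC-dimension. Both facts are immediate from the definitions.
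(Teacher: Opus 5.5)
Your proposal is correct and follows the paper's own proof: restrict to the subfamily of non-isolated vertices, observe that it has VC-dimension at most $d$ and still all $m$ Hamming edges, and apply \Cref{lem:hamming-deg} to get $m\le d\cdot(\text{number of non-isolated vertices})$. Your extra remarks are sound. They cover the monotone reading of the lemma and the degenerate cases $m=0$ and $d=0$, which the paper leaves implicit.
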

\begin{proof}
  Let $\cal G\subset \cal F=V(H)$ be the set of non-isolated vertices of $H$.
  Then $\cal G$ is a set system of VC-dimension at most $d$ with $m=|E(H)|$ Hamming edges,
so $m\le d|\cal G|$ by \Cref{lem:hamming-deg}.
\end{proof}

\paragraph{Bipartite Graphs as Set Systems}
In the proof of \cref{thm:nei-comp-NIP}, it will be enough to consider bipartite graphs $G=(A,B,E)$.
Every bipartite graph $G=(A,B,E)$ \emph{induces} a set system $\setof{N(b)}{b\in B}$ on $A$, and we denote this set system by $\inducedSS{B}{A}$. 
Notice that $|\inducedSS{B}{A}|=|B|$ if $B$ does not contain any pair of \emph{twins}, that is, two vertices with equal neighborhoods.
More generally, for $A'\subset A$ and $B'\subset B$, we view $\inducedSS{B'}{A'}$ as the set system $\setof{N(b)\cap A'}{b\in B'}$ on~$A'$. 




We define two auxiliary bipartite graphs from $G=(A,B,E)$ -- the \emph{positive merge graph} and the \emph{negative merge graph} -- as follows. Both graphs have parts $A$ and $\inducedSS{B}{A}$.
We include edges $\set{a,F}$ with $a\in A$ and  $F\in \inducedSS{B}{A}$ such that $F$ is $a$-positive or $a$-negative, respectively, in the set system $\inducedSS{B}{A}$ on $A$.
For instance, we include $\set{a,F}$ as an edge of the positive merge graph if there exists a set $F' \in \inducedSS{B}{A}$ such that $F \Delta F'=\{a\}$ and $a \in F$. In this case $F'=F-\{a\}$, and we also include $\set{a,F'}$ in the negative merge graph. The intuition is that removing $a$ from the domain causes $F$ and $F'$ to merge, i.e., $F-\{a\}=F'$.
Note that a set $F\in \inducedSS{B}{A}$ is non-isolated in the Hamming graph 
of $\inducedSS{B}{A}$ if and only if it is non-isolated in at least one of the merge graphs of $\inducedSS{B}{A}$.
%
The advantage of merge graphs over the Hamming graph is that they allow us to describe subsets of $B$ using vertices of $A$. Moreover, \cref{lem:VC-induction} yields the following.

\begin{lemma}
\label{lem:blue-vc}
  Let $G=(A,B,E)$ be a bipartite graph and let  $v\in A$.
  Then the neighborhood of $v$ in both the positive merge graph and the negative merge graph 
  is a set system of  strictly smaller VC-dimension than $\inducedSS{B}{A}$.
\end{lemma}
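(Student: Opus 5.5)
The plan is to identify the neighborhood of $v$ in each merge graph with one of the two set systems in \Cref{lem:VC-induction}, applied to $\cal F \coloneqq \inducedSS{B}{A}$ and the element $v\in A$.

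First, unfold the definitions. In the positive merge graph, the neighbors of $v\in A$ are exactly the sets $F\in\inducedSS{B}{A}$ that are $v$-positive. Being $v$-positive means $F$ lies in some $v$-pair of $\inducedSS{B}{A}$ and $v\in F$. So the neighborhood of $v$, viewed as a set system on $A$, is precisely
\[\setof{F\in\inducedSS{B}{A}}{F \text{ is } v\text{-positive}}.\]
Symmetrically, in the negative merge graph the neighborhood of $v$ is $\setof{F\in\inducedSS{B}{A}}{F \text{ is } v\text{-negative}}$. One point needs care here: the merge graphs have vertex set $\inducedSS{B}{A}$ itself, not $B$. Hence each neighbor of $v$ is literally a subset of $A$, and the neighborhood really is a subfamily of $\inducedSS{B}{A}$, with no twin issues.

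Second, apply \Cref{lem:VC-induction} with $V=A$ and $\cal F=\inducedSS{B}{A}$. Both subfamilies above then have VC-dimension strictly smaller than that of $\inducedSS{B}{A}$.

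Third, handle the degenerate case. If $\inducedSS{B}{A}$ is empty, which happens exactly when $B=\emptyset$, the lemma's nonemptiness hypothesis fails. In that case the neighborhoods are empty, with VC-dimension $-\infty$, and "strictly smaller" must be read with the convention that $-\infty$ is not smaller than itself. I would either note that we assume $B\neq\emptyset$, or remark that the statement is vacuous or conventional in that case.

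There is no real obstacle beyond matching definitions carefully. The only subtle step is confirming that the neighborhood is exactly the $v$-positive (respectively $v$-negative) subfamily, so that \Cref{lem:VC-induction} applies verbatim.
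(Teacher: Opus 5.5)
Your proposal is correct and matches the paper's argument: by definition, the neighborhood of $v$ in the positive (resp.\ negative) merge graph is exactly the family of $v$-positive (resp.\ $v$-negative) sets of $\inducedSS{B}{A}$, so \Cref{lem:VC-induction} applies verbatim. Your remark that the case $B=\emptyset$ must be excluded is a fair point: \Cref{lem:VC-induction} assumes a nonempty family, and this hypothesis is left implicit in the statement.
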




\paragraph{High-level Overview} We now give an overview of the 
proof of \cref{thm:nei-comp-NIP}.
It is enough to show that if $\BB$ is a monadically dependent class of bipartite graphs, and $G=(A,B,E) \in \BB$ has no twins in $B$, then $|B|\le |A|^{1+o(1)}$.
As $\BB$ is monadically dependent, 
the VC-dimension of $\inducedSS{B}{A}$ is bounded by some $d\in\N$  depending only on $\BB$.


The crux of the argument is to find sets $A_1\subset A$ and $B_1\subset B$ with $|B_1|\ge |B|/\polylog(|A|)$, and a partition $\cal P_1$ of $B_1$ so that for each part $P$ of $\cal P_1$,
all vertices of $P$ have distinct neighborhoods on $A_1$, and $\inducedSS{P}{A_1}$ has a VC-dimension at most $d-1$. Moreover, the partition $\cal P_1$ should be definable by a fixed transduction. We now sketch this construction.

We would like the Hamming graph of $\inducedSS{B}{A}$ to have many edges so that we can apply \cref{cor:hamming}. However, even if $|\inducedSS{B}{A}|$ is much larger than $|A|$, its Hamming graph may be sparse or even edgeless
(think of the set system of even-sized subsets of $[n]$). 
The next lemma shows that after restricting to a suitable $A_1\subseteq A$, one obtains a dense Hamming graph.

\begin{lemma}[Special case of Lem. \ref{lem:star}]\label{lem:star0}
  Let $G=(A,B,E)$ be a bipartite graph without twins in $B$ and  $|B|>2$.
  Then there is a set $A_1\subseteq A$ 
  s.t. the Hamming graph of $\inducedSS{B}{A_1}$ has at least $\frac{|B|}{2\ln |A|+2}$~edges.
\end{lemma}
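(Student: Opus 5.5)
We plan to find $A_1$ by deleting the elements of $A$ one at a time in a uniformly random order and tracking the neighborhood complexity $g(X)\coloneqq|\inducedSS{B}{X}|$ of the current set $X$. Since $B$ has no twins, $g(A)=|B|$. At the end of the process $g(\emptyset)=1$, because every trace on the empty set is $\emptyset$ (and $B\neq\emptyset$). So along any deletion order the total decrease of $g$ is exactly $|B|-1$, and it suffices to show that some intermediate set $X$ has a Hamming graph with many edges.

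The key identity concerns what happens when an element $a\in X$ is deleted. Two distinct traces $F,F'\in\inducedSS{B}{X}$ become equal on $X\setminus\{a\}$ if and only if $F\triangle F'=\{a\}$, i.e., $\{F,F'\}$ is an $a$-pair. Moreover, each trace lies in at most one $a$-pair. Hence
\[g(X)-g(X\setminus\{a\}) = e_a(X),\]
where $e_a(X)$ is the number of $a$-pairs (Hamming edges labelled $a$) in $\hamming(\inducedSS{B}{X})$. Summing over $a\in X$ gives $\sum_{a\in X} e_a(X) = |E(\hamming(\inducedSS{B}{X}))|$.

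Now suppose, for contradiction, that every $X\subseteq A$ satisfies $|E(\hamming(\inducedSS{B}{X}))|< M$, where $M\coloneqq |B|/(2\ln|A|+2)$. Consider the step at which the current set $X$ has size $j$ and the next element to delete is uniform in $X$. The expected decrease of $g$ in this step is $|E(\hamming(\inducedSS{B}{X}))|/j < M/j$. By linearity of expectation, the expected total decrease is less than
\[M\sum_{j=1}^{|A|}\frac1j\le M(\ln|A|+1) = \frac{|B|}{2}.\]
The total decrease is deterministically $|B|-1$, so this yields $|B|-1<|B|/2$, i.e., $|B|<2$, contradicting $|B|>2$. Therefore some $X$ encountered along the process (in fact, some $X\subseteq A$) has a Hamming graph with at least $M$ edges, and we take $A_1\coloneqq X$.

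The main point requiring care is the identity $g(X)-g(X\setminus\{a\})=e_a(X)$. We must check that the collapse caused by deleting $a$ is exactly a perfect matching on the sets involved in $a$-pairs, so that no three traces merge together. This holds because traces restricted to $X\setminus\{a\}$ determine the full trace up to membership of $a$, so each merged class has size at most two. The rest is the harmonic-sum estimate $H_{|A|}\le \ln|A|+1$ together with the elementary inequality $|B|-1\ge |B|/2$. The argument also shows that the set $A_1$ can be found greedily, by always deleting the element $a$ minimizing $e_a(X)$.
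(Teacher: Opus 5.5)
Your proof is correct and is essentially the paper's argument. The paper tracks the same potential $|B/X|$ under one-element deletions, uses the same identity that the drop caused by deleting $v$ equals the number of Hamming edges labelled $v$, and bounds the total drop by the same harmonic sum. The only difference is phrasing: the paper runs a greedy deletion with threshold $m/(2|X|H(n))$ and shows it must halt at a nonempty $X$, which then has many Hamming edges, whereas you argue by contradiction using a random (or greedy-minimum) deletion order.
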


Let $A_1\subset A$ be as in \Cref{lem:star0}
and let $B'\subset B$ be a maximal subset with no twins towards $A_1$,
so that $\inducedSS{B}{A_1}=\inducedSS{B'}{A_1}$.
By \Cref{cor:hamming}, the Hamming graph of $\inducedSS{B'}{A_1}$ has at least 
${|B|}/({d(2\ln |A|+2)})$ non-isolated vertices.
Each such vertex is non-isolated in either the positive or the negative merge graph of $\inducedSS{B'}{A_1}$.
Thus, one of the two merge graphs has at least ${|B|}/({2d(2\ln |A|+2)})$ non-isolated vertices in $\inducedSS{B'}{A_1}$; by symmetry, assume it is the positive merge graph. Note that by \cref{lem:blue-vc}, the neighborhood of each vertex in $A_1$ in the positive merge graph of $\inducedSS{B'}{A_1}$ has VC-dimension smaller than $d$. We now wish to make these neighborhoods disjoint so that we can find a partition of $\inducedSS{B'}{A_1}$. To do so, we use the following lemma from \cite{stable_MC} to find a large subset $B_1\subset B'$ and a set $A_1'\subset A_1$ such that every vertex in $B_1$ has exactly one neighbor in~$A_1'$.

\begin{lemma}[{\cite[Lemma~10]{stable_MC}}]\label{lem:dreier-sampling}
  Let $G=(X,Y,E)$ be a bipartite graph
  with no isolated vertices. Then there are  $X'\subset X$ and $Y'\subset Y$ with $|Y'|\ge \frac{|Y|}{150\ln |X|}$ such that every vertex in $Y'$ has exactly one neighbor in $X'$.
\end{lemma}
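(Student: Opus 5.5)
We choose $X'$ randomly at a suitable density and then let $Y'$ be the set of vertices of $Y$ with exactly one neighbor in $X'$. The degrees in $Y$ can vary widely, so we first bucket by degree. For each $y\in Y$ we have $1\le \deg(y)\le |X|$. Partition $Y$ into classes $Y_i=\setof{y\in Y}{2^{i}\le \deg(y)<2^{i+1}}$ for $i=0,1,\dots,\lfloor\log_2|X|\rfloor$. There are at most $\log_2|X|+1$ classes, so some class $Y_i$ satisfies $|Y_i|\ge |Y|/(\log_2|X|+1)$. Since $\log_2|X|+1 = \Oh(\ln|X|)$ when $|X|\ge 2$, this loses only a logarithmic factor. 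The case $|X|=1$ is trivial: take $X'=X$ and $Y'=Y$, since every $y$ has exactly one neighbor.

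Fix such an $i$ and let $p=2^{-(i+1)}$. Sample $X'\subseteq X$ by including each vertex independently with probability $p$. Consider $y\in Y_i$ and write $k=\deg(y)$, so $2^i\le k<2^{i+1}$ and hence $pk\in[1/2,1)$. The probability that $y$ has exactly one neighbor in $X'$ is
\[
kp(1-p)^{k-1}\ \ge\ \tfrac12(1-p)^{1/p}.
\]
For $p\le 1/2$ we have $(1-p)^{1/p}\ge 1/4$, so this probability is at least $1/8$. By linearity of expectation, the expected number of vertices of $Y_i$ with exactly one neighbor in $X'$ is at least $|Y_i|/8$. Fix an outcome $X'$ attaining at least this expectation.

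Let $Y'$ be the set of vertices of $Y_i$ that have exactly one neighbor in $X'$; every vertex of $Y'$ then satisfies the required property. Its size is
\[
|Y'|\ \ge\ \frac{|Y_i|}{8}\ \ge\ \frac{|Y|}{8(\log_2|X|+1)}.
\]
It remains to compare this with $|Y|/(150\ln|X|)$. Since $\log_2|X|=\ln|X|/\ln 2$, it suffices to have $8(\ln|X|/\ln2+1)\le 150\ln|X|$. For $|X|\ge 2$ we have $\ln|X|\ge \ln 2$, so the left side is at most $8(1.45+1.45)\ln|X|\approx 23\ln|X|$, which is well below $150\ln|X|$.

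The only mildly delicate step is the elementary bound on the single-hit probability at the edge cases $i=0$ and $p=1/2$. There $k=1$ gives probability $1/2$, and $k$ close to $2^{i+1}$ still yields at least $1/8$, as shown above. The argument is also easy to derandomize if an algorithmic version is needed, for example by the method of conditional expectations.
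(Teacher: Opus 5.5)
Your proof is correct: with the $\lfloor\log_2|X|\rfloor+1$ dyadic degree classes and sampling density $p=2^{-(i+1)}$, the probability of exactly one hit is at least $\tfrac12(1-p)^{1/p}\ge 1/8$, so $|Y'|\ge |Y|/(8(\log_2|X|+1))\ge |Y|/(24\ln|X|)$ for $|X|\ge 2$. For $|X|=1$ the stated bound is degenerate since $\ln 1=0$, but $Y'=Y$ is the right reading, as you note. The paper gives no proof of this lemma and imports it from \cite{stable_MC}; your argument (bucket by degree, then sample independently at density about the reciprocal degree) is the standard proof of this unique-neighbor lemma, and it even gives a better constant than $150$.
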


By applying \cref{lem:dreier-sampling} to the positive merge graph of
$\inducedSS{B'}{A_1}$ restricted to its non-isolated vertices, we obtain sets $A_1'\subset A_1$ and 
$B_1\subseteq B'$ with
\(
  |B_1|\,\ge\, \Omega\!\left({|B|}/({d\log^2 |A|})\right)
\)
such that every vertex in $B_1$ has exactly one neighbor in $A_1'$ in the
positive merge graph of $\inducedSS{B}{A_1}$. 
Let $\cal P_1$ partition $B_1$ into the neighborhoods (in the positive merge graph) of vertices in $A_1'$.
By
\cref{lem:blue-vc}, for each part $P\in\cal P_1$, the set system $\inducedSS{P}{A_1}$ has VC-dimension at
most $d-1$. The key point is that while we achieved a major progress in reducing the VC-dimension, this came only at a cost of restricting attention to a set $B_1$ that is still large compared with $B$, losing only a polylogarithmic multiplicative factor. So repeating this argument $d$ times preserves a subset of size $|B|/\polylog(|A|)^d$.

We then induct on each part of $\cal P_1$, in parallel. 
On a high level, the resulting process is as follows. Starting from all
vertices of $B$, we keep a large subset $B_i\subseteq B$, a set
$A_i\subseteq A$, and a partition $\cal P_i$ of~$B_i$, so that each part $P\in \cal P_i$ is specified by $i$ vertices of $A$ and has no twins towards $A_i$, and the set system $\inducedSS{P}{A_i}$ has VC-dimension at most
$d-i$. Each step loses only a $\polylog(|A|)$ factor in the size of $B_i$ and lowers the
VC-dimension by one. The next step can be carried as long as $|\cal P_i|<|B_i|/2$, and the process is terminated once $|\cal P_i|\ge |B_i|/2$. This must happen in step $i=d$ at the latest, since 
then for each $P\in \cal P_d$, the set system $\inducedSS{P}{A_d}$ has VC-dimension $0$, so $|P|=1$ and thus $|\cal P_d| = |B_d|$.
Once the process terminates, we have that:
\begin{itemize}
  \item $B_i$ is still almost as large as $B$: $|B_i|\ge |B|/\polylog_d(|A|)$, and
  \item $\cal P_i$ contains many parts: $|\cal P_i|\ge |B_i|/2$.
\end{itemize}

The bipartite graph with sides $\cal P_i$ and $A$, where each 
part $P\in\cal P_i$ is adjacent to the~$i\le d$ vertices encoding it, is $K_{d+1,d+1}$-free and transducible from $G$ by a fixed transduction.
Applying \Cref{thm:weaklySparseCase} then gives 
$|\cal P_i|\le |A|^{1+o(1)}$. Combining the inequalities yields
$|B|\le |A|^{1+o(1)}$, as desired.

\section{Radius-1 Merge-Width}
\label{sec:merge-width}

This section is dedicated to proving \Cref{thm:mergeWidth}, which shows how to efficiently compute a construction sequence with small radius-1 merge-width. We begin by defining the relevant terms.

\paragraph{Merge-Width}

Consider a vertex set $V$.
A \emph{construction sequence} is a sequence of steps, maintaining a partition $\cal P$~of~$V$ and a partition of  ${V\choose 2}$ into three sets:  
\emph{edges} $E$, \mbox{\emph{non-edges}~$N$},
and \emph{unresolved} pairs~$U$.
 Initially, $\cal P$ partitions $V$ into singletons, and every pair in $V\choose 2$ is unresolved.
In each step, one of three operations is performed:
\begin{itemize}
  \item  \emph{merge} two parts $A,B\in\cal P$, \(A \neq B\), replacing the two parts by their union $A\cup B$, 
  \item \emph{resolve positively} a pair of parts $A,B\in\cal P$ (possibly $A=B$), declaring 
 all the unresolved pairs $\set{a,b}\in U$ with $a\in A,b\in B$ as \emph{edges}, that is, moving them from \(U\) to \(E\), or 
  \item \emph{resolve negatively} a pair of parts $A,B\in\cal P$ (possibly $A=B$), declaring 
     the corresponding unresolved pairs as \emph{non-edges}, that is, moving them from \(U\) to \(N\).
\end{itemize}
In the end, we require that $\cal P$ has one part, and that every pair from ${V\choose 2}$ is resolved as either an edge or a non-edge.
We thus say this is a construction sequence of the graph $G=(V,E)$, where \(E\) is the final edge set of the process. 

The \emph{radius-$r$ width}
of a construction sequence is the smallest integer $k$ such that at every step in the sequence, for every vertex $v\in V$, at most $k$ parts
of the current partition $\cal P$ can be reached from $v$ by a path of length ${\le}\,r$ in the graph $(V,E\cup N)$.
Finally, the \emph{radius-$r$ merge-width} of a graph~$G$
is the minimum radius-$r$ width over all construction sequences of $G$. 

Observe that for any construction sequence of a graph $G$, at any given step, for any two parts $A,B$ of the partition $\cal P$, the unresolved vertex pairs $ab$ with $a\in A$ and $b\in B$ are either all edges of~$G$, or are all non-edges in $G$.
See \Cref{fig:b} for an example illustrating radius-$1$ merge-width.

{\definecolor{lightred}{RGB}{255, 158, 159}
\definecolor{darkred}{RGB}{176, 0, 2} 
\definecolor{lightergray}{RGB}{234,234,234}
\definecolor{darkergray}{RGB}{126,126,126}
\begin{figure}
    \centering
    \includegraphics[scale=0.9]{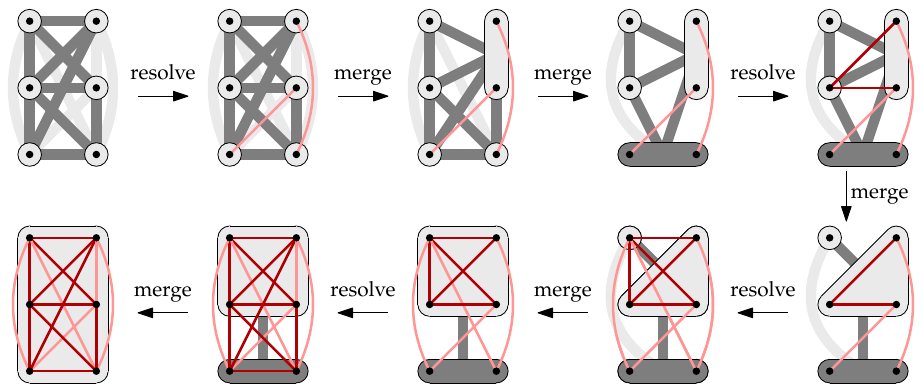}
    \caption{
    A construction sequence of a graph witnessing radius-1 merge-width at most three.
    The \emph{resolved pairs} are drawn as  \textcolor{darkred}{\rule[0.5ex]{0.3cm}{1.3pt}} (edges) and \textcolor{lightred}{\rule[0.5ex]{0.3cm}{1.3pt}} (non-edges).
    The \emph{unresolved pairs} are drawn in an aggregated way as \textcolor{darkergray}{\rule[-0.ex]{0.3cm}{5.3pt}} (edges) 
    and \textcolor{lightergray}{\rule[-0.ex]{0.3cm}{5.3pt}} (non-edges) connecting parts,
    and as
    background colors within each part.
    The first and sixth displayed stages each present several resolve steps.
    Figure replicated with permission from~\cite[Fig.\ 4]{mw-mc}.
}\label{fig:b}
\end{figure}
}
\paragraph{Fractional Twins}
Our algorithm is built on techniques from computational geometry and the combinatorics of set systems related to VC-dimension and $\delta$-separated set systems. In particular, our algorithm uses the \emph{multiplicative weight update} technique of Welzl~\cite{Welzl88, ChazelleWelzl89}. In order to apply this technique, we now describe how graphs with small neighborhood complexity contain weighted near twins, i.e. pairs of vertices with almost the same neighborhood.

For a graph $G$ with distinct vertices $u$ and $v$, we write $\Delta_G(u,v)\coloneqq (N(u)\triangle N(v))\cup\set{u,v}$ for the set of all vertices that are either equal to $u$ or $v$ or adjacent to exactly one of $u$ and $v$. 
We omit the subscript whenever the graph is clear from context.

For a positive real number $p$, we say that a graph $G$ has \emph{$p$-fractional twins} if for every weight function $w:V(G) \rightarrow \mathbb{Z}^+$, there exist distinct vertices $u,v \in V(G)$ such that \begin{align}\label{eq:frac-twins} w\left(\Delta(u,v)\right) \leq p \cdot w(V(G)).\end{align}For a set $S\subset V(G)$ we denote $w(S)\coloneqq \sum_{v \in S}w(v)$. Any  pair of vertices $u$ and $v$ satisfying \eqref{eq:frac-twins}  is called a pair of \emph{$p$-fractional twins with respect to $w$}. 

The following lemma is a corollary of Haussler's Packing Lemma~\cite{Haussler95}.

\begin{restatable}[\inAppendix{}]{lemma}{existFractionalTwins}
\label{lem:existFractionalTwins}
For all real numbers $c$ and $d\geq 1$, there exists an integer $r = r(c,d)$ so that if $G$ is an $n$-vertex graph such that $n\geq 2$ and for every nonempty $A \subseteq V(G)$,
    \[\Big|\set{N(v)\cap A\,:\,v\in V(G)}\Big| \leq c|A|^d,\]
then $G$ has $(r \cdot n^{-1/d})$-fractional twins.
\end{restatable}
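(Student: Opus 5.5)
We reduce to Haussler's Packing Lemma, applied to a weighted version of the neighborhood set system. Fix a weight function $w\colon V(G)\to\mathbb{Z}^+$ and let $W=w(V(G))$. Form the multiset $V_w$ in which each vertex $x$ is replaced by $w(x)$ copies. For each vertex $v$, let $\widehat N(v)$ be the set of copies of vertices of $N(v)\cup\{v\}$; taking closed neighborhoods makes $u,v$ themselves contribute to the symmetric difference.

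Consider the set system $\mathcal F=\{\widehat N(v): v\in V(G)\}$ on a ground set of size $W$. For $u\ne v$, the symmetric difference $\widehat N(u)\triangle \widehat N(v)$ consists of copies of vertices in $(N(u)\triangle N(v))\triangle\{u,v\}$. This differs from $\Delta(u,v)$ only on $\{u,v\}$, so $w(\Delta(u,v))\le |\widehat N(u)\triangle\widehat N(v)|+w(u)+w(v)$. The extra term $w(u)+w(v)$ is a nuisance. We handle it by also requiring that $u,v$ have small weight: at most $n/2$ vertices have weight exceeding $2W/n$, so we restrict attention to the set $L$ of light vertices, with $|L|\ge n/2$. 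Then $w(u)+w(v)\le 4W/n\le 4W n^{-1/d}$, using $d\ge 1$.

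Next we bound the shatter function of $\mathcal F\res L$, that is, of $\{\widehat N(v):v\in L\}$ on the weighted ground set. For a set $Y$ of $m$ copies, the traces $\widehat N(v)\cap Y$ are determined by the trace of $N[v]$ on the underlying set $A$ of at most $m$ vertices. Closed neighborhoods differ from open ones only by adding $v$ itself, so the number of traces is at most $c|A|^d+|A|\le (c+1)m^d$. Hence the primal shatter function is $O(m^d)$ and the weighted system behaves as a system of "dimension" $d$.

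Haussler's Packing Lemma, in its shatter-function form, states the following. If every pair in a family on an $N$-point ground set is at symmetric-difference distance greater than $\delta$, then the family has size at most $C(N/\delta)^d$, with $C$ depending only on $c,d$. We apply it with ground set size $W$ to the at most $|L|$ distinct sets $\widehat N(v)$, $v\in L$. If two light vertices have equal $\widehat N$, they are at distance $0$ and we are done. Otherwise suppose all pairwise distances exceed $\delta=r'Wn^{-1/d}$. Then $n/2\le |L|\le C(W/\delta)^d=C n/r'^d$, which is a contradiction once $r'^d>2C$. So some pair $u\ne v$ in $L$ has $|\widehat N(u)\triangle\widehat N(v)|\le r'Wn^{-1/d}$, and therefore $w(\Delta(u,v))\le (r'+4)Wn^{-1/d}$. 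Set $r=r'+4$.

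The main obstacle is checking that Haussler's lemma applies with the shatter-function hypothesis rather than bounded VC-dimension, and with multiset (weighted) ground sets. The standard proof, via Chazelle's version or the packing lemma for primal shatter function $O(m^d)$, goes through verbatim, since it only samples ground points. Copies of the same vertex are interchangeable, so traces on samples with repeated copies still come from traces on vertex subsets of at most the sample size; this is exactly the bound established above. The case $n$ small, with $n\ge2$, is absorbed by taking $r\ge 2$, since then $r n^{-1/d}\ge 1$ and any pair trivially satisfies $w(\Delta(u,v))\le W$.
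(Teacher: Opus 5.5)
Your argument is correct and is essentially the paper's: both apply Haussler's packing lemma in its shatter-function form to a ground set of weighted copies of the vertices, and both bound the traces by $c|A|^d+|A|\le (c+1)|A|^d$. The one real difference is how the term $w(u)+w(v)$ is paid for. The paper doubles the ground set by adding $w(v)$ ``self copies'' of each $v$ with neighborhood $\{v\}$, so the auxiliary symmetric difference always dominates $w(\Delta(u,v))$ and all $n$ sets are automatically distinct. You instead use closed neighborhoods on $W$ copies, which contain $u,v$ only when $u\not\sim v$, and absorb the diagonal term by restricting to the more than $n/2$ light vertices. That costs you the extra case of coinciding $\widehat N$'s and an additive $4$ in $r$ (using $d\ge 1$).
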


Next we show how to find a single pair of fractional twins with respect to two different weight functions simultaneously. We note that the auxiliary function $w$ below is just a weighted average. 


\begin{restatable}[\inAppendix{}]{lemma}{twoWeightFunctions}
\label{lem:twoWeightFunctions}
    Let $p>0$, let $G$ be a graph with $p$-fractional twins, and let $w_1, w_2: V(G) \rightarrow \mathbb{Z}^+$. Then there exist distinct vertices $u$ and $v$ that are $3p$-fractional twins with respect to both $w_1$ and $w_2$. Moreover, given an arbitrary graph $G$ and functions $w_1,w_2: V(G) \rightarrow \mathbb{Z}^+$ as input (i.e., without any knowledge of $p$), such vertices $u$ and $v$ can be found by computing an auxiliary weight function $w$ and selecting $u$ and $v$ to minimize $w(\Delta(u,v))$. If $G$ has at most $n$ vertices and every vertex has weight at most $2^n$ with respect to both $w_1$ and $w_2$, then this algorithm runs in time $\mathcal{O}(n^4)$.
\end{restatable}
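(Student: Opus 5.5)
The natural auxiliary weight is the normalized average
\[
w(x) \coloneqq \frac{w_1(x)}{w_1(V(G))} + \frac{w_2(x)}{w_2(V(G))},
\]
scaled by the common denominator $W\coloneqq w_1(V(G))\cdot w_2(V(G))$ so that it is integer valued and positive. Concretely, I would set $w(x)\coloneqq w_1(x)\,w_2(V(G)) + w_2(x)\,w_1(V(G))$, which is a map $V(G)\to\mathbb{Z}^+$. Summing over $V(G)$ gives $w(V(G)) = 2W$.

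The plan is to apply the $p$-fractional twins hypothesis to $w$. This yields distinct $u,v$ with $w(\Delta(u,v))\le p\cdot 2W$, and the algorithm outputs a pair minimizing $w(\Delta(u,v))$, which therefore also satisfies this bound. Expanding the definition,
\[
w_1(\Delta(u,v))\,w_2(V(G)) + w_2(\Delta(u,v))\,w_1(V(G)) \le 2p\,w_1(V(G))\,w_2(V(G)).
\]
Both summands on the left are nonnegative, so each is at most the right-hand side. Dividing the first by $w_2(V(G))$ gives $w_1(\Delta(u,v))\le 2p\,w_1(V(G))$, and symmetrically for $w_2$. This is even better than the required $3p$. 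The constant $3$ presumably leaves slack for a rounding-based variant, but the integer scaling avoids rounding entirely.

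For the running time, every weight is at most $2^n$ and $|V(G)|\le n$, so the totals are at most $n2^n$. Hence $w$ takes values at most about $n^2 4^n$, i.e.\ $\mathcal{O}(n)$-bit numbers, and computing $w$ costs polynomial time. For each of the $\mathcal{O}(n^2)$ pairs $u,v$, we compute $\Delta(u,v)$ in $\mathcal{O}(n)$ time from adjacency rows. Summing $w$ over it takes $\mathcal{O}(n)$ additions of $\mathcal{O}(n)$-bit numbers, i.e.\ $\mathcal{O}(n^2)$ bit operations, or $\mathcal{O}(n)$ word operations if we count additions of $\mathcal{O}(n)$-bit integers more carefully. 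Comparisons are cheap. The main obstacle is getting exactly $\mathcal{O}(n^4)$ under the right cost model. Under bit complexity, I would precompute $w$ once and then spend $\mathcal{O}(n\cdot n)$ per pair on the big-number sums, giving $\mathcal{O}(n^4)$ in total. Computing the products $w_i(x)\,w_{3-i}(V(G))$ naively costs $\mathcal{O}(n^2)$ per vertex, hence $\mathcal{O}(n^3)$ overall, which is dominated. The algorithm never needs to know $p$, since it simply minimizes.
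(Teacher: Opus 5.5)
Your proof is correct and follows the paper's approach. Both build a single positive integer auxiliary weight that normalizes $w_1$ and $w_2$ by their totals, apply the fractional-twins hypothesis once to it, and split the bound using nonnegativity of each summand, with the same $\mathcal{O}(n^2)$ pairs times $\mathcal{O}(n^2)$ bit operations per pair. The only difference is the normalization: assuming $w_1(V(G))\ge w_2(V(G))$, the paper uses $w=w_1+w_2\lceil w_1(V(G))/w_2(V(G))\rceil$, and the rounding is what costs the factor $3$. Your exact cross-multiplication $w_1\,w_2(V(G))+w_2\,w_1(V(G))$ avoids rounding and gives the slightly stronger constant $2p$.
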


\paragraph{Finding Construction Sequences}

Now we prove the main theorem about radius-1 merge-width.

\mergeWidth* \begin{proof} 

    The algorithm directly constructs the construction sequence in $n-1$ rounds, each consisting of several resolve operations followed by a single merge. We write $\Pp_i$ and $R_i\subset {V\choose 2}$ for the partition and the set of resolved vertex pairs at the beginning of round~$i=0,1,\ldots,n-2$, where $|\cal P_i|=n-i$. Along with these, we maintain the following additional information:
    \begin{itemize}
    \item a \emph{leader} $L_i(P) \in P$ for each part $P \in \Pp_i$; we write $L_i(x)$ for the leader of the part containing~$x$;
    \item the \emph{leader graph} $G_i$, the subgraph of $G$ induced by the leaders $\{L_i(P): P \in \Pp_i\}$; and
    \item a weight function $w_i\colon V(G_i) \rightarrow \mathbb{Z}^+$.
    \end{itemize}
    
    Let $r=r(c,d)$ denote the integer from Lemma~\ref{lem:existFractionalTwins}. Since the assumption of the theorem holds not only for $G$ but trivially also for every induced subgraph of $G$, by Lemma~\ref{lem:existFractionalTwins} every induced subgraph of $G$ with $\ell\geq 2$ vertices has $(r \cdot \ell^{-1/d})$-fractional twins. In particular, since at every stage we merge exactly two parts, the leader graph $G_i$ has exactly $n-i$ vertices. Thus $G_i$ has $r(n-i)^{-1/d}$-fractional twins. We now give the algorithm.

    \begin{tcolorbox}[breakable, enhanced, colback=black!5, colframe=black!40, boxrule=0.5pt, left=6pt, right=6pt, top=6pt, bottom=6pt]
    \noindent\textbf{The algorithm.} Let $G_0=G$ and $\mathcal{P}_0$ be the partition of $V(G)$ into $n$ singleton sets, and $R_0=\emptyset$. For each $P \in \mathcal{P}_0$, let its leader $L_0(P)$ be the unique vertex $x \in P$ and set $w_0(x)=1$. 

    \medskip

    \noindent For $i=0,1,2,\ldots, n-2$, do the following.\begin{enumerate}[leftmargin=*]
    \item\label{itm:twoWeightsStep} Apply \Cref{lem:twoWeightFunctions} to the weight function $w_i$ of $G_i$ and the weight function $w_\mathbf{1}$ assigning every vertex of $G_i$ weight 1; that is, compute a weighted average of these two weight functions and then find distinct vertices $u_i$ and $v_i$ so that $\Delta_{G_i}(u_i, v_i)$ has minimum weight with respect to this weighted average. Thus we are guaranteed that\begin{align}\label{equationMUW} w_i(\Delta_{G_i}(u_i, v_i)) &\leq 3r (n-i)^{-1/d} \cdot w_i(V(G_i)),\textit{\quad and}\\
      \label{equationSize}
    |\Delta_{G_i}(u_i, v_i)|
    &=
    w_\mathbf{1}(\Delta_{G_i}(u_i, v_i))
    \leq
    3r (n-i)^{-1/d} \cdot w_\mathbf{1}(V(G_i))
    \leq 3r (n-i)^{1-1/d}.
    \end{align}
    Rename $u_i$ and $v_i$ so that $|\Pp_i(u_i)| \ge |\Pp_i(v_i)|$, where $\Pp_i(w)$ denotes the part of $\Pp_i$ containing~$w$.
    \item\label{itm:resolveStep} For each part $Q\in\Pp_i$ with $L_i(Q)\in\Delta_{G_i}(u_i,v_i)$, resolve $\Pp_i(v_i)$ and~$Q$: positively if $v_i$ and $L_i(Q)$ are adjacent, negatively otherwise. Add the resolved vertex pairs to~$R_i$ to obtain~$R_{i+1}$.

    (In particular, since $u_i \in\Delta_{G_i}(u_i,v_i)$, this step resolves $\PP(u_i)$ and $\PP(v_i)$.)
    \item\label{itm:leaderChoice} Let $\Pp_{i+1}$ be the partition of $V(G)$ formed by merging $\Pp_i(u_i)$ and $\Pp_i(v_i)$ into a single new part, whose leader is~$u_i$. All other parts keep their leaders.
    \item\label{itm:weightUpdate}
        Mirroring the multiplicative weight update strategy of Welzl~\cite{Welzl88, ChazelleWelzl89},
        set for each leader $x$ in the new leader graph $G_{i+1}$,
    \[
    w_{i+1}(x)\coloneqq
    \begin{cases}
    2\max(w_i(u_i),w_i(v_i)) & \text{if } x=u_i,\\
    2w_i(x) & \text{if } x\in \Delta_{G_i}(u_i,v_i)\setminus\{u_i,v_i\},\\
    w_i(x) & \text{otherwise.}
    \end{cases}
    \]
    \end{enumerate}
    \end{tcolorbox}

    \bigskip
    
    We claim that the algorithm indeed computes a construction sequence for the graph $G$. The intuition is that for each unresolved pair $\set{x,y}\notin R_i$ with $x$ and $y$ in different parts, the adjacency between $x$ and $y$ agrees with the adjacency between their leaders $L_i(x)$ and $L_i(y)$. Any time this invariant would be violated by a leader change, the pair is resolved beforehand in Step~\ref{itm:resolveStep}.

    \begin{restatable}[\inAppendix{}]{claim}{buildsGraph}
    \label{clm:buildsGraph}
    The algorithm computes a construction sequence of $G$.
    \end{restatable}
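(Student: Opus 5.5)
We argue by induction on $i$, maintaining the following invariant $(\star)$ at the beginning of round $i$: for every pair $\{x,y\}\notin R_i$ with $x,y$ in distinct parts of $\Pp_i$, we have $xy\in E(G)$ if and only if $L_i(x)L_i(y)\in E(G)$. We additionally use the trivial facts that $R_i$ only contains correctly resolved pairs, and that after the last round every pair is resolved. For $i=0$ the invariant holds trivially, since every vertex is its own leader and all parts are singletons.

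\textbf{Correctness of the resolve steps.} In Step~\ref{itm:resolveStep} we resolve $\Pp_i(v_i)$ against a part $Q$ according to the adjacency of $v_i$ and $L_i(Q)$. For an unresolved pair $\{x,y\}$ with $x\in\Pp_i(v_i)$ and $y\in Q$, where $Q\neq \Pp_i(v_i)$, the invariant $(\star)$ gives $xy\in E$ if and only if $v_iL_i(Q)\in E$. Hence the pair is resolved correctly. The case $Q=\Pp_i(v_i)$ does occur, since $v_i\in\Delta$ and $L_i(\Pp_i(v_i))=v_i$; it resolves pairs inside one part. This case needs separate care, because $(\star)$ only speaks about pairs in distinct parts. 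I would therefore strengthen the invariant with a second clause: every pair inside a part is resolved, and resolved correctly. Under that clause the self-resolve touches no unresolved pair, so it is harmless.

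\textbf{Preservation of the invariant under the merge.} After the merge, the pairs inside the new part $\Pp_i(u_i)\cup\Pp_i(v_i)$ are all resolved. Pairs inside each old part were resolved by the strengthened invariant, and cross pairs between the two old parts were resolved because $u_i\in\Delta(u_i,v_i)$ means $\Pp_i(v_i)$ was resolved against $\Pp_i(u_i)$. For pairs between distinct new parts, only vertices in the old part of $v_i$ change leader, from $v_i$ to $u_i$. Take an unresolved pair $\{x,y\}$ with $x\in \Pp_i(v_i)$ and $y\in Q$ for another part $Q$. Because it is still unresolved, $L_i(Q)\notin\Delta(u_i,v_i)$, so $L_i(Q)$ is adjacent to $u_i$ if and only if it is adjacent to $v_i$. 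Combined with $(\star)$ for round $i$, this gives $xy\in E$ if and only if $u_iL_i(Q)\in E$, which is $(\star)$ for round $i+1$. All other pairs keep both leaders, so $(\star)$ carries over unchanged.

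\textbf{Termination.} After round $n-2$ there is a single part, and by the strengthened clause every pair inside it is resolved. All resolutions were correct, so the final edge set equals $E(G)$. The main subtlety is exactly the intra-part clause, namely confirming that the self-resolution of $\Pp_i(v_i)$ and the cross resolution of $\Pp_i(u_i)$ against $\Pp_i(v_i)$ leave no unresolved pair inside a merged part.
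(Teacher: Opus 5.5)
Your proof is correct and follows the paper's argument essentially step for step. The paper uses exactly your two invariants: all pairs inside a part of $\mathcal{P}_i$ are resolved, and unresolved cross pairs agree with the adjacency of their leaders. It also uses the same observation that an unresolved pair leaving $\mathcal{P}_i(v_i)$ has its other leader outside $\Delta_{G_i}(u_i,v_i)$. Your explicit remark that resolving $\mathcal{P}_i(v_i)$ against itself touches no unresolved pair is a detail the paper leaves implicit in its first invariant.
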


    We next bound the radius-$1$ merge-width of the computed construction sequence. We begin by bounding the number of times the leader of a vertex can change. This  is the only place where we use the choice of which vertex becomes the new leader in step~\ref{itm:leaderChoice} of the algorithm. Essentially, this choice balances the ``tree of how leaders change'', ensuring that it has logarithmic height.

    \begin{restatable}[\inAppendix{}]{claim}{numberOfLeaders}
    \label{clm:numberOfLeaders}
        For each vertex $x \in V(G)$, there are at most $\log_2 n + 1$ different vertices which are ever leaders for the part containing $x$. That is, $|\{L_i(x): i \in \{0,1,\ldots, n-1\}\}|\leq\log_2 n + 1$.
    \end{restatable}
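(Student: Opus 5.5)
The plan is the standard "union by size" argument. We track the size of the part containing a fixed vertex $x$ and show that it at least doubles every time the leader of that part changes.

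First we pin down when $L_i(x)$ can change. By Step~\ref{itm:leaderChoice}, only the part $\Pp_i(u_i)\cup\Pp_i(v_i)$ is affected in round $i$, and its new leader is $u_i$. For $x\in\Pp_i(u_i)$ we claim that $L_i(x)=u_i$ already. Indeed, $u_i$ and $v_i$ are vertices of the leader graph $G_i$, so $u_i=L_i(\Pp_i(u_i))$, and the leader of $x$'s part does not change. Hence $L_{i+1}(x)\neq L_i(x)$ is only possible when $x\in\Pp_i(v_i)$ and $u_i\neq v_i$. In that case the renaming in Step~\ref{itm:twoWeightsStep} guarantees $|\Pp_i(u_i)|\ge|\Pp_i(v_i)|$. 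The new part containing $x$ therefore has size $|\Pp_i(u_i)|+|\Pp_i(v_i)|\ge 2|\Pp_i(v_i)|$, so it is at least twice the size of $x$'s previous part.

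Next we combine this with monotonicity. Part sizes containing $x$ never decrease, since parts only merge. They start at $1$ and never exceed $n$. So if the leader of $x$'s part changes $t$ times, then $2^t\le n$, which gives $t\le\log_2 n$.

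Finally we count distinct leaders. The sequence $L_0(x),L_1(x),\dots$ changes value at most $\log_2 n$ times. It therefore takes at most $\log_2 n+1$ distinct values, and indeed at most that many even counting repeats.

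The only subtle point is the first step, namely checking that a leader change really requires $x$ to lie in the smaller part $\Pp_i(v_i)$. This follows because $u_i$ is itself the leader of its part, being a vertex of $G_i$. Once that is settled the rest is routine.
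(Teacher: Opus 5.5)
Your proof is correct and is essentially the paper's argument. Both rest on one fact: the part of $v_i$, whose leader gets replaced, is the smaller of the two merged parts. The paper phrases this as a bound of $\log_2|P|$ on the height of the merge tree, where $v_i$ is made a child of $u_i$ so that the leaders of $x$ are its ancestors. You instead track the size of the part containing $x$, which at least doubles at each leader change; these are the two standard, equivalent presentations of the union-by-size bound.
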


    Finally, we bound for each vertex the number of rounds \(i\) such that its leader can appear in $\Delta_{G_i}(u_i, v_i)$.
    The proof of this claim very closely resembles the multiplicative weight update strategy of Welzl~\cite{Welzl88, ChazelleWelzl89}. This claim is the only place where we use inequality~(\ref{equationMUW}).
    
    \begin{restatable}[\inAppendix{}]{claim}{MWUtotal}
    \label{clm:MWUtotal}
        For each vertex $x \in V(G)$, there are at most $\cal O_{c,d}(n^{1-1/d} + \log{n})$ many integers $i \in \{0,1,\ldots, n-2\}$ so that $L_i(x) \in \Delta_{G_i}(u_i,v_i)$.
    \end{restatable}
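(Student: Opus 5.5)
We use a potential argument in the style of Welzl: track the total weight $W_i \coloneqq w_i(V(G_i))$ and the weight of the current leader of $x$.

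\emph{Upper bound on the total weight.} We bound $W_{i+1}$ in terms of $W_i$. In round $i$ the two leaders $u_i,v_i$ are replaced by the single leader $u_i$, whose new weight is $2\max(w_i(u_i),w_i(v_i))\le 2(w_i(u_i)+w_i(v_i))$. Every other leader in $\Delta_{G_i}(u_i,v_i)$ has its weight doubled, and all remaining weights are unchanged. Hence
\[
W_{i+1}\le W_i + w_i(\Delta_{G_i}(u_i,v_i)) \le \bigl(1+3r(n-i)^{-1/d}\bigr)W_i ,
\]
where the second inequality is \eqref{equationMUW}. Starting from $W_0=n$ and using $1+t\le e^t$, we get
\[
W_i\le n\exp\Bigl(3r\sum_{j=2}^{n} j^{-1/d}\Bigr).
\]
For $d>1$ the sum is $\Oh_d(n^{1-1/d})$. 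For $d=1$ it is harmonic, so it is $\Oh(\log n)$. Therefore $\ln W_i \le \ln n + \Oh_{c,d}(n^{1-1/d}+\log n)$ for every $i$.

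\emph{Lower bound on the weight of $x$'s leader.} Fix $x$ and consider $w_i(L_i(x))$. We check that this quantity never decreases.
\begin{itemize}
\item If the leader of $x$ does not change, its weight either stays the same or doubles.
\item If the part of $x$ is merged, then the new leader $u_i$ receives weight $2\max(w_i(u_i),w_i(v_i))$. This is at least twice the old weight of $L_i(x)\in\{u_i,v_i\}$.
\end{itemize}
Moreover, whenever $L_i(x)\in\Delta_{G_i}(u_i,v_i)$, the weight of $x$'s leader at least doubles, in both of the cases above. So if this event happens $t$ times, then $2^t\le w_j(L_j(x))\le W_j$ for the final index $j$ reached. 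Note that $W$ is nondecreasing, so we may use the bound on $W_j$ from the previous step.

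\emph{Conclusion and the delicate point.} Combining the two bounds gives $t\le \log_2 W_{n-1}=\Oh_{c,d}(n^{1-1/d}+\log n)$. The only delicate point is the bookkeeping in the merge case. The replacement of two weights by one must be charged to the $w_i(\Delta)$ term, which works because both $u_i$ and $v_i$ lie in $\Delta_{G_i}(u_i,v_i)$. The same fact guarantees that the leader's weight still doubles when $x$'s part is one of the two merged parts.
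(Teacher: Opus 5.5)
Your proposal is correct and matches the paper's proof step for step: the same potential $W_i=w_i(V(G_i))$, the same charging of the merged pair $\{u_i,v_i\}\subseteq\Delta_{G_i}(u_i,v_i)$ to get $W_{i+1}\le\bigl(1+3r(n-i)^{-1/d}\bigr)W_i$, and the same doubling argument for the weight of $x$'s current leader. The differences are cosmetic. You bound the merge contribution via $2\max(a,b)\le 2(a+b)$ where the paper computes it exactly as $|a-b|$, and you reuse the letter $t$ both in $1+t\le e^t$ and for the event count.
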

    \noindent 
    We will next use \Cref{clm:numberOfLeaders} and \Cref{clm:MWUtotal} to bound the width of the construction sequence. 
    In each round $i$ of the algorithm, multiple  resolve steps are performed, followed by a single merge step.
    During each such step, the set of resolved edges is a subset of $R_{i+1}$ and the width is measured in the partition $\PP_i$ (for the resolve steps) or in $\PP_{i+1}$ (for the merge step).
    The width of a single step can therefore be upper bounded by the width achieved for $\PP_i$ and $R_{i+1}$: the finest partition and the most resolved edges.
    As $\PP_i$ and $\PP_{i+1}$ only differ in one part, up to a difference of $1$, the following claim suffices to bound the width of the entire construction sequence.

    \begin{claim}\label{clm:bound-mw}
      For each $i \in \{0,1,2,\ldots, n-1\}$ and vertex $x\in V(G)$, $x$ has neighbors in at most $\mathcal{O}_{c,d}(n^{1-1/d}\log{n})$ different parts of $\PP_i$ in the graph $(V(G), R_i)$.
    \end{claim}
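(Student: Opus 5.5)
The plan is a charging argument. Every resolved pair $\set{x,y}\in R_i$ is resolved in some round $j<i$, in Step~\ref{itm:resolveStep}. In that round, one endpoint lies in $\Pp_j(v_j)$ and the other lies in a part $Q$ whose leader $L_j(Q)$ is in $\Delta_{G_j}(u_j,v_j)$. Since $v_j\in\Delta_{G_j}(u_j,v_j)$ as well, in round $j$ the leaders of both endpoints' parts lie in $\Delta_{G_j}(u_j,v_j)$. So for a fixed vertex $x$, every resolved pair containing $x$ was created in a round $j$ with $L_j(x)\in\Delta_{G_j}(u_j,v_j)$. We will bound, for each such round, how many parts of $\Pp_i$ receive neighbors of $x$ from that round, and sum over rounds.

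\textbf{Case 1: $x\in\Pp_j(v_j)$.} Here $x$ is paired with the parts $Q$ having $L_j(Q)\in\Delta_{G_j}(u_j,v_j)$. By \eqref{equationSize}, there are at most $3r(n-j)^{1-1/d}\le 3rn^{1-1/d}$ such parts of $\Pp_j$. Because partitions only coarsen over time, these vertices lie in at most that many parts of $\Pp_i$.

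\textbf{Case 2: $x$ lies in some part $Q$ with $L_j(Q)\in\Delta_{G_j}(u_j,v_j)$ and $x\notin\Pp_j(v_j)$.} Then $x$ is paired only with vertices of the single part $\Pp_j(v_j)$, which lands inside one part of $\Pp_i$. So this round contributes at most one part.

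Combining the cases, the number of parts is at most
\[
\sum_{j<i,\ L_j(x)\in\Delta_{G_j}(u_j,v_j)} \bigl(1+[x\in\Pp_j(v_j)]\cdot 3rn^{1-1/d}\bigr).
\]
The sum of the $1$'s is bounded by \Cref{clm:MWUtotal}, giving $\Oh_{c,d}(n^{1-1/d}+\log n)$.

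The main obstacle is the second term. Bounding the number of Case~1 rounds by \Cref{clm:MWUtotal} would give only $n^{2-2/d}$, which is too weak. Instead I would use \Cref{clm:numberOfLeaders}. Whenever $x\in\Pp_j(v_j)$, the part of $x$ is merged and receives the new leader $u_j\neq v_j$. If $L_j(x)=v_j$, the leader of $x$'s part changes in round $j$. By \Cref{clm:numberOfLeaders}, this happens at most $\log_2 n+1$ times.

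One subtlety remains. We have $L_j(x)=v_j$ exactly when $x\in\Pp_j(v_j)$, since $v_j$ is a leader in $G_j$ and thus is the leader of its own part. Hence every Case~1 round is a leader-change round for $x$, and there are at most $\log_2 n+1$ of them. Case~1 therefore contributes at most $(\log_2 n+1)\cdot 3rn^{1-1/d}$ parts.

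Adding the two contributions gives $\Oh_{c,d}(n^{1-1/d}\log n)$. The final step is to check that the round-$j$ pairing sets are correctly described even though partitions coarsen afterwards; coarsening only merges parts, so the count can only decrease.
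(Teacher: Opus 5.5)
Your proof is correct and is essentially the paper's argument: the rounds $j$ where $x\in\Pp_j(v_j)$ are leader changes for $x$, so there are at most $\log_2 n+1$ of them by \Cref{clm:numberOfLeaders}, each contributing at most $3rn^{1-1/d}$ parts by \eqref{equationSize}; every other resolving round contributes one part and has $L_j(x)\in\Delta_{G_j}(u_j,v_j)$, so \Cref{clm:MWUtotal} applies. The one implicit step is that \Cref{clm:numberOfLeaders} bounds the number of leader changes only because a replaced leader $v_j$ is never a leader again, which holds since the set of leaders only shrinks; the paper leaves this implicit too.
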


    \begin{proof}
      Fix $x\in V(G)$.
    Our goal is to bound the number of parts of $\Pp_i$ which contain a vertex \(y\) such that $\set{x,y}\in R_i$ by $\mathcal{O}_{c,d}(n^{1-1/d}\log{n})$.
    Each relevant pair $\set{x,y}$ is resolved at a unique round $j\le i$.
    Disregarding the single part $\PP_i(x)$, we can assume that $x$ and $y$ are in different parts of $\PP_i$.
    Then $x$ and $y$ are also in different parts of the finer partition $\Pp_j$, and exactly one of them is in~$\Pp_j(v_j)$.
    Recall that $v_j$ does \emph{not} become the leader of the merged part.
    We distinguish two cases.

    First, we bound the number of parts of $\Pp_i$ which contain a vertex $y$
    so that $\set{x,y}$ was resolved at time \(j\) with $x\in\Pp_j(v_j)$.
    Then $L_j(x) \neq L_{j+1}(x)$. By \Cref{clm:numberOfLeaders}, there are
    at most $\log_2 n$ such rounds $j$. Moreover, by
    inequality~(\ref{equationSize}), there are at most $3r(n-j)^{1-1/d} \leq 3rn^{1-1/d}$ different
    parts of $\Pp_j$ that $y$ could be contained in. Since $\PP_i$ is coarser than each $\PP_j$, this case contributes at most $3rn^{1-1/d} \log_2{n} \leq \cal O_{c,d}(n^{1-1/d} \log{n})$ parts reachable in $\PP_i$.

    Next, we bound the number of parts of $\Pp_i$ which contain a vertex $y$
    so that $\set{x,y}$ was resolved at time \(j\) with $y\in\Pp_j(v_j)$.
    At each round $j$, there is at most one such part, namely $\Pp_j(v_j)$. Moreover, $L_j(x) \in
    \Delta_{G_j}(u_j, v_j)$ by definition. By \Cref{clm:MWUtotal}, there are
    only $\cal O_{c,d}(n^{1-1/d} + \log{n})$ such rounds~$j$.
\newcommand{\setoff}[2]{\{#1\,:\,#2\}}

    Summing the two cases,
     $|\setoff{\Pp_i(y)}{\set{x,y}\in R_i|}\le \mathcal{O}_{c,d}(n^{1-1/d}\log{n})$, as claimed.
    \end{proof}

    Having bounded the width of the computed construction sequence, it remains to bound the running time. The bottleneck is applying
    \Cref{lem:twoWeightFunctions} in each of the $n$ rounds
    (step~\ref{itm:twoWeightsStep}). Since $G_i$ has at most $n$ vertices and $w_i$
    assigns weight at most $2^n$ to each vertex, each application runs in time
    $\mathcal{O}(n^4)$, giving an overall runtime of $\mathcal{O}(n^5)$.
\end{proof}

\bibliographystyle{plain}
\bibliography{ref}

\appendix

\input{nc}

\section{Radius-1 Merge-Width}
\label{apn:mergeWidth}

In this section we include the proofs that were omitted from \Cref{sec:merge-width} about merge-width.

Before beginning the main proofs, we need to state Haussler's Packing Lemma~\cite{Haussler95}. We state an equivalent formulation in terms of bipartite graphs. This is equivalent to the usual version for set systems: take $X$ to be the ground set and $\{N(v):v \in Y\}$ to be the set system. We refer the reader to~\cite[Lem. 5.14]{discrepancy} for another statement of the lemma.

\begin{lemma}[\cite{Haussler95}]
\label{lem:Haussler}
    For all real numbers $c$ and $d\geq 1$, there exists an integer $t = t(c,d)$ so that for any integer $\delta\geq 1$, if $H=(X,Y,E)$ is a bipartite graph such that\begin{enumerate}
        \item for all distinct vertices $u,v \in Y$, there are at least $\delta$ vertices in $X$ which are adjacent to exactly one of $u$ and $v$, and
        \item for every nonempty $A \subseteq X$, we have $|\set{N(v)\cap A\,:\, v\in Y}| \leq c|A|^d$,
    \end{enumerate}
then $|Y|\leq \max \left(t \cdot (|X|/\delta)^d, 1\right)$.
\end{lemma}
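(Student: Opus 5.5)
The statement to prove is Haussler's Packing Lemma, in bipartite form. I would follow Haussler's original probabilistic argument, in the streamlined form found in the discrepancy literature. The goal is to show that if $Y$ is $\delta$-separated on $X$ and the neighborhood system has shatter function at most $c|A|^d$, then $|Y|\le t(|X|/\delta)^d$.

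\textbf{Reductions.} Write $n=|X|$ and $\mathcal F=\{N(v):v\in Y\}$. By the separation hypothesis, distinct vertices of $Y$ have distinct neighborhoods, so $|\mathcal F|=|Y|$. If $\delta$ is comparable to $n$, say $\delta\ge n/2$, then take $A=X$: we get $|Y|\le c n^d\le c\,2^d (n/\delta)^d$, which is of the required form. So we may assume $\delta$ is small relative to $n$. Also note that the shatter bound forces VC-dimension $d_0=O_{c,d}(1)$, since $2^{k}\le ck^d$ fails for large $k$. This gives the linear edge bound of \Cref{lem:hamming-deg} on every restriction $\mathcal F\res A$.

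\textbf{Main argument.} Choose a random subset $A\subseteq X$ of size $s\approx C d_0 n/\delta$, uniformly at random. Consider the restricted system $\mathcal F\res A$. Each set $S\in\mathcal F\res A$ has a fibre $\mathcal F_S=\{F\in\mathcal F: F\cap A=S\}$, and $\sum_S|\mathcal F_S|=|\mathcal F|$. The number of sets is $|\mathcal F\res A|\le c s^d$.

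Haussler's key step is to count, for a random extra element $a\in X\setminus A$, the expected number of pairs in a common fibre of $A$ that are split by $a$. These are the pairs that become Hamming edges of $\mathcal F\res(A\cup\{a\})$.

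\emph{Lower bound.} Within a single fibre, any two sets differ in at least $\delta$ elements outside $A$. A clean way to get the lower bound is to define a weight $\sum_S \sum_{F\in\mathcal F_S} \Pr_a[F\text{ differs from the fibre's majority at }a]$. Separation then forces this to be at least roughly $(|\mathcal F|-|\mathcal F\res A|)\cdot\delta/(2n)$. Equivalently, one uses the variance formulation from \cite[Lem. 5.14]{discrepancy}.

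\emph{Upper bound.} Use symmetry: $A\cup\{a\}$ is a uniformly random $(s+1)$-set with $a$ a uniformly random element of it. The Hamming graph of $\mathcal F\res(A\cup\{a\})$ has at most $d_0|\mathcal F\res(A\cup\{a\})|$ edges, by \Cref{lem:hamming-deg}. Averaging over which element is $a$ bounds the expected weight by about $d_0\,\mathbb E|\mathcal F\res(A\cup\{a\})|/(s+1)$ times the fibre sizes.

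\emph{Combining.} Comparing the two bounds gives
\[
|\mathcal F|\bigl(\tfrac{\delta}{2n}-\tfrac{d_0}{s+1}\bigr)\lesssim \mathbb E|\mathcal F\res A|.
\]
Choosing $s\approx 4d_0n/\delta$ makes the bracket at least $\delta/(4n)$. This yields
\[
|\mathcal F|\lesssim \tfrac{n}{\delta}\cdot\tfrac{1}{s}\cdot c\,(s+1)^{d}\cdot s\approx t\,(n/\delta)^d ,
\]
using the shatter bound $c|A|^d$ on sets of size about $s$.

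\textbf{Main obstacle.} The delicate step is the lower bound. One must set up the correct weighted quantity: Haussler's "unit-distance edges in fibres," counted with the right per-fibre normalization. It has to convert $\delta$-separation into expected Hamming edges that scale with the full fibre size, not merely with the number of fibres. Otherwise the bound loses a factor and gives only $(n/\delta)^{d}\log(n/\delta)$.

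I would follow Haussler's exact weighting. It uses the fact that in a fibre of size $m$, splitting by a random element produces at least $(m-1)\cdot\Pr[\text{split}]$ edges, via the unit-distance/variance inequality. Only the constant bookkeeping then remains, giving $t=t(c,d)$. If $d$ is non-integral, the same computation goes through with $s^d$ in place of the polynomial bound.
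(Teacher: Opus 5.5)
The paper does not prove this lemma; it imports it from \cite{Haussler95} and \cite[Lem.~5.14]{discrepancy}. Your skeleton is exactly that standard proof: a random $s$-set $A$ with $s\approx 4d_0n/\delta$, the weight $W=\sum_{\text{fibres}}\min(p_a,q_a)$ for a random $a\notin A$, a lower bound from separation, and an upper bound from \Cref{lem:hamming-deg}. As written, however, your combining step does not yield the lemma. Your two bounds are $\mathbb E W\ge\frac{\delta}{2n}\bigl(|\mathcal F|-\mathbb E|\mathcal F\res A|\bigr)$ and $\mathbb E W\le\frac{d_0|\mathcal F|}{s+1}$. Comparing them gives
\[
  |\mathcal F|\Bigl(\frac{\delta}{2n}-\frac{d_0}{s+1}\Bigr)\le\frac{\delta}{2n}\,\mathbb E|\mathcal F\res A|.
\]
You dropped the factor $\frac{\delta}{2n}$ on the right. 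Without it you only get $|\mathcal F|\le\frac{4n}{\delta}\,cs^d=O\bigl((n/\delta)^{d+1}\bigr)$. Your final expression $\frac{n}{\delta}\cdot\frac{1}{s}\cdot c(s+1)^d\cdot s$ indeed evaluates to this, not to $t(n/\delta)^d$. Keeping the factor, $s+1\ge 4d_0n/\delta$ gives $|\mathcal F|\le 2\,\mathbb E|\mathcal F\res A|\le 2cs^d$, which is the claim. (The degenerate case you must split off is $s+1>n$, i.e.\ $\delta=O(d_0)$, where $A=X$ gives $|\mathcal F|\le cn^d=O_{c,d}\bigl((n/\delta)^d\bigr)$. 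The case $\delta\ge n/2$ is not the problematic one.)

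The two inequalities that carry the argument are only gestured at, and your description of one of them is wrong.

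For the upper bound, ``Hamming edges times the fibre sizes'' does not say how fibre sizes enter; charging each edge a maximal fibre size would fail. What you need is the weighted form of \Cref{lem:hamming-deg}. Let $S'=A\cup\{a\}$ and $w(T)=|\{F\in\mathcal F:F\cap S'=T\}|$. Then the sum of $\min(w(T),w(T'))$ over Hamming edges $\{T,T'\}$ of $\mathcal F\res{S'}$ is at most $d_0\sum_T w(T)=d_0|\mathcal F|$. This follows by applying \Cref{lem:hamming-deg} to each subfamily $\{T:w(T)\ge j\}$ and summing over $j\ge 1$. This sum equals $\sum_{a'\in S'}W_{a'}$, which gives $\mathbb E W\le d_0|\mathcal F|/(s+1)$.

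For the lower bound, take a fibre $\mathcal F_S$ of size $m$, and let $p_a,q_a$ count its members containing and not containing $a$. The right per-fibre inequality is $\min(p_a,q_a)\ge p_aq_a/m$, together with $\sum_{a\notin A}p_aq_a=\sum_{\{F,G\}\subseteq\mathcal F_S}|F\triangle G|\ge\binom{m}{2}\delta$. This gives $\sum_{a\notin A}\min(p_a,q_a)\ge(m-1)\delta/2$. Your stated fact that a random split produces at least $(m-1)\Pr[\text{split}]$ edges is false for the minority count that your upper bound controls. For instance, if $F_i\setminus A=B_i$ for pairwise disjoint blocks $B_1,\dots,B_m$, then $\min(p_a,q_a)\le 1$ for every $a$. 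So its expectation is $\Pr[\text{split}]$, not $(m-1)$ times it.
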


Now we can use \Cref{lem:Haussler} to prove the key lemma about fractional twins.

\existFractionalTwins*
\begin{proof}
    Let $t = t(c+1,d)$ be the integer from \Cref{lem:Haussler}, and write $V$ for $V(G)$.

    Fix a weight function $w\colon V \rightarrow \mathbb{Z}^+$. We define a bipartite graph $H=(X,Y,E)$ with $Y=V$ and $|X|=2w(V)$, as follows. For each vertex $v$ of $G$, we place $2w(v)$ vertices in $X$, called the \emph{copies} of~$v$: $w(v)$ of them have neighborhood $N_G(v)$ in $H$, and the remaining $w(v)$ have neighborhood $\{v\}$ in $H$. We call the former \emph{neighborhood copies} and the latter \emph{self copies} of~$v$.

    We verify condition~(2) of \Cref{lem:Haussler} with constant $c+1$. 
    
    \begin{claim}
      For every nonempty $A \subseteq X$, we have $|\setof{N_H(v)\cap A}{v\in Y}| \leq (c+1)|A|^d$.
    \end{claim}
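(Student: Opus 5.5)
Fix a nonempty $A\subseteq X$ and split it as $A = A_N \cup A_S$, where $A_N$ consists of the neighborhood copies in $A$ and $A_S$ of the self copies in $A$. For $v\in Y=V$ the trace $N_H(v)\cap A$ decomposes as $(N_H(v)\cap A_N)\cup(N_H(v)\cap A_S)$, and we count the possible pairs of these two pieces separately.

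\emph{Neighborhood-copy part.} A neighborhood copy $a$ of a vertex $u$ is adjacent in $H$ to $v$ exactly when $v\in N_G(u)$, i.e.\ when $u\in N_G(v)$. Let $\pi(A_N)\subseteq V$ be the set of vertices $u$ having at least one neighborhood copy in $A_N$. Then $N_H(v)\cap A_N$ is the set of copies in $A_N$ of the vertices in $N_G(v)\cap \pi(A_N)$, so it is determined by $N_G(v)\cap\pi(A_N)$. If $A_N\neq\emptyset$, the hypothesis applied to the nonempty set $\pi(A_N)$, with $|\pi(A_N)|\le |A_N|\le |A|$, gives at most $c|\pi(A_N)|^d\le c|A|^d$ distinct traces on $A_N$. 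If $A_N=\emptyset$ there is just one trace.

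\emph{Self-copy part.} A self copy of $u$ is adjacent only to $u$. Hence $N_H(v)\cap A_S$ is empty unless $v\in\pi(A_S)$, in which case it is the set of self copies of $v$ lying in $A_S$. Let $k=|\pi(A_S)|\le |A_S|$. All $v\notin \pi(A_S)$ have empty trace on $A_S$, and each of the $k$ vertices $v\in\pi(A_S)$ contributes one further trace.

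\emph{Counting.} Vertices $v\notin\pi(A_S)$ have traces of the form $(T,\emptyset)$ with $T$ a trace on $A_N$, giving at most $\max(c|A|^d,1)$ possibilities; the vertices in $\pi(A_S)$ add at most $k\le |A|$ more. So the total is at most $c|A|^d+|A|\le (c+1)|A|^d$, using $|A|\ge1$ and $d\ge1$. In the degenerate case $A_N=\emptyset$ the count is at most $1+|A|$, and this case needs separate care, since it requires $c\ge 1$ or a slightly adjusted constant.

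I expect this degenerate corner to be the only real obstacle. The hypothesis applied to $A=\{v\}$ with $n\ge2$ only forces $c\ge 1/2$, not $c\ge1$. To handle it, I would sharpen the count: if $A_N=\emptyset$ and every vertex of $G$ lies in $\pi(A_S)$, then no vertex has the empty trace, so the total is exactly $|\pi(A_S)|\le|A|$; otherwise it is at most $1+|\pi(A_S)|\le|A|+1$. If even this does not fit under $(c+1)|A|^d$, I would simply replace $c$ by $\max(c,1)$ at the outset, which only changes the constant $t$ and hence $r$.
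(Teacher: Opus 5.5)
Your argument is correct and is essentially the paper's. The vertices owning a self copy in $A$ contribute at most $|A|$ traces, and every other trace is determined by a trace of $G$ on a projected set of size at most $|A|$, giving $c|A|^d+|A|\le(c+1)|A|^d$.

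The corner case $A_N=\emptyset$ is not a real obstacle, and your remark that the hypothesis only forces $c\ge 1/2$ is mistaken. Applying it to any singleton $\{x\}$ gives $1\le|\{N(v)\cap\{x\}: v\in V(G)\}|\le c$, so $1+|A|\le(c+1)|A|^d$ follows directly. The paper avoids this case altogether by projecting onto the set of all vertices having \emph{any} copy in $A$, which is nonempty whenever $A$ is.
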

    \begin{proof}
    Consider a nonempty set $A \subseteq X$, and let $A' \subseteq V$ be the set of vertices of $G$ with at least one copy in $A$. Since each element of $A$ is a copy of exactly one vertex, $|A'| \leq |A|$. For $u \in Y$ with no copy in $A$, the set $N_H(u) \cap A$ consists exactly of the neighborhood copies in $A$ of vertices in $N_G(u)$, which is determined by $N_G(u) \cap A'$. Hence,
    \[
    |\setof{N_H(v) \cap A}{v \in Y}| \leq |A'| + |\setof{N_G(v) \cap A'}{v \in V}| \leq |A| + c|A|^d \leq (c+1)|A|^d.\qedhere
    \]
    \end{proof}

    We apply \Cref{lem:Haussler} contrapositively with $\delta \coloneqq r \cdot n^{-1/d} w(V)$, where $r = r(c,d)$ is chosen below. 
    Using $|X| = 2w(V)$, the bound from \Cref{lem:Haussler} gives
    \begin{align*}
    |Y| \leq \max\!\left(t\cdot \left(\frac{|X|}{\delta}\right)^{\!d}\!, 1\right) \leq \max\!\left(t\cdot \left(\frac{2}{r}\right)^{\!d} n, 1\right).
    \end{align*}

    Choosing $r$ large enough so that $t \cdot (2/r)^d < 1$, the right-hand side is less than $n = |Y|$, a contradiction. So condition~(1) of \Cref{lem:Haussler} must fail: there exist distinct $u, v \in Y$ such that fewer than $\delta$ vertices of $X$ are adjacent to exactly one of $u$ and $v$.
    It remains to show that $w(\Delta_G(u,v)) \leq \delta$.
    The $w(u)$ self copies of $u$, the $w(v)$ self copies of $v$, and for each $x \in \Delta_G(u,v) \setminus \{u,v\}$, the $w(x)$ neighborhood copies of $x$, are all distinct elements of $X$ adjacent to exactly one of $u$ and $v$. Hence $w(\Delta_G(u,v)) = w(u) + w(v) + w(\Delta_G(u,v) \setminus \{u,v\}) < \delta = rn^{-1/d}w(V)$, as desired.
\end{proof}

The next lemma shows how to obtain a single pair of vertices which are fractional twins with respect to two weight functions simultaneously.

\twoWeightFunctions*
\begin{proof}
    Denote $V\coloneqq V(G)$. Without loss of generality, we may assume that $w_1(V) \geq w_2(V)$. Now, for each vertex $x \in V$, set \begin{align*} w(x) = w_1(x)+w_2(x) \cdot \left\lceil {w_1(V)}/{w_2(V)}\right\rceil.\end{align*}Then $w(V)\leq 2w_1(V)+w_2(V)\leq 3 w_1(V)$ where the extra $w_2(V)$ is used to handle parity. Now, let $u$ and $v$ be distinct vertices of $G$ which are $p$-fractional twins with respect to $w$. Then \begin{align*} w_1(\Delta(u,v)) \leq w(\Delta(u,v))\leq p\cdot w(V) \leq 3p\cdot w_1(V),\end{align*} so $u$ and $v$ are $3p$-fractional twins with respect to $w_1$. To prove $3p$-fractional twins with respect to $w_2$, we have by definition 
    \[
      w_2(\Delta(u,v)) \cdot \left\lceil {w_1(V)}/{w_2(V)}\right\rceil = w(\Delta(u,v)) - w_1(\Delta(u,v)).
    \]
    So in particular
    \[
      w_2(\Delta(u,v))\cdot \left( {w_1(V)}/{w_2(V)}\right) \leq w(\Delta(u,v))\leq 3p\cdot w_1(V).
    \]
    Dividing the leftmost and the rightmost side by $\left( {w_1(V)}/{w_2(V)}\right)$, we obtain $w_2(\Delta(u,v)) \leq 3p\cdot w_2(V)$,
    as desired.

    Now, notice that every vertex has weight at most $2^{\mathcal{O}(n)}$ with respect to $w$. So we may find such vertices $u$ and $v$ by iterating through all $\mathcal{O}(n^2)$ pairs of vertices $u$ and $v$, and then computing $w(\Delta(u,v))$ in $\mathcal{O}(n^2)$ time. (Here we represent the weights of size $2^{\mathcal{O}(n)}$ as bitstrings of length $\mathcal{O}(n)$, which we can add and compare in time $\mathcal{O}(n)$.)
\end{proof} 

We conclude this section by proving three claims which are all part of the proof of \Cref{thm:mergeWidth}. The reader may wish to refer back to that proof.

\buildsGraph*
\begin{proof}
    By induction on \(i\), we maintain two invariants at the beginning of round \(i\):
    \begin{itemize}
        \item\label{inv:within} Every pair of vertices within the same part of~$\Pp_i$ is resolved.
        \item\label{inv:leader} For every unresolved pair $\set{x,y} \not\in R_i$ with $x$ and $y$ in different parts of~$\Pp_i$, $x$ is adjacent to $y$ if and only if $L_i(x)$ is adjacent to $L_i(y)$.
    \end{itemize}

    Together, these invariants prove the claim:
    At round~$n-1$, only a single part is left, and by the first item every pair is resolved.
    Moreover, Step~\ref{itm:resolveStep} resolves a pair \(\{x,y\}\) positively if and only if their leaders are adjacent.
    By the second item, this is the case if and only if \(x\) and \(y\) are adjacent.
    Hence, every pair is resolved according to their adjacency in the graph, proving the claim.

    At $i=0$, every vertex is its own leader and every part is a singleton, so
    both items hold trivially. Assume both hold at round~$i$, and we verify them at
    round~$i+1$.

    To prove the first invariant, observe that the only new part of~$\Pp_{i+1}$ is $\Pp_i(u_i)\cup\Pp_i(v_i)$.
    Pairs within~$\Pp_i(u_i)$ or within~$\Pp_i(v_i)$ are already resolved by induction.
    Since $u_i\in\Delta_{G_i}(u_i,v_i)$, Step~\ref{itm:resolveStep} resolves \(\Pp_i(u_i)\) and \(\Pp_i(v_i)\),
    resolving all pairs in $\Pp_i(u_i)\cup\Pp_i(v_i)$ and proving the invariant.
    
    To prove the second invariant, let $\set{x,y}\not\in R_{i+1}$ with $x$ and~$y$ in different parts of~$\Pp_{i+1}$.
    Then $\set{x,y}$ was unresolved also at round~$i$, and $x,y$ are in different parts of~$\Pp_i$ as well (since parts only merge).
    The only leader that changes from round~$i$ to~$i{+}1$ is that of vertices in~$\Pp_i(v_i)$: from~$v_i$ to~$u_i$.
    If neither $x$ nor~$y$ is in~$\Pp_i(v_i)$, then both leaders are unchanged and the invariant carries over from round~$i$.
    So suppose $x\in\Pp_i(v_i)$ (the other case is symmetric).
    Since $x$ and~$y$ are in different parts of~$\Pp_{i+1}$, we have $y\notin\Pp_i(u_i)\cup\Pp_i(v_i)$, so $L_{i+1}(y)=L_i(y)$.
    Since $\set{x,y}$ was not resolved at round~$i$, we have $L_i(y)\notin\Delta_{G_i}(u_i,v_i)$, meaning $L_i(y)$ has the same adjacency to~$u_i$ and~$v_i$.
    Again, the invariant carries over from round~$i$.

\end{proof}

\numberOfLeaders*
\begin{proof}
    Consider a rooted forest on~$V(G)$ that evolves with the algorithm.
    At round~$0$, there are $n$ single-vertex trees, one per part.
    At round~$i$, when parts $\Pp_i(u_i)$ and $\Pp_i(v_i)$ merge, we make~$v_i$ a child of~$u_i$, joining the two trees with~$u_i$ as root.
    After all $n{-}1$ rounds, this yields a single rooted tree~$T$ on~$V(G)$.
    The distinct leaders of a vertex~$x$ across all rounds are exactly its ancestors in~$T$ (including~$x$ itself), so the claim is equivalent to~$T$ having height at most~$\log_2 n$.

    We show by induction on~$i$ that for each part $P \in \Pp_i$, the tree corresponding~$P$ has height at most~$\log_2 |P|$.
    At round~$0$ this holds since each tree has height~$0$.
    When $\Pp_i(u_i)$ and $\Pp_i(v_i)$ merge, the height of the new tree is $\max(h_u,\, 1 + h_v)$, where $h_u$ and~$h_v$ are the heights of the two subtrees.
    By induction, $h_u \leq \log_2 |\Pp_i(u_i)| \leq \log_2 |\Pp_i(u_i) \cup \Pp_i(v_i)|$.
    Since $|\Pp_i(u_i)| \geq |\Pp_i(v_i)|$, we have $1 + h_v \leq 1 + \log_2 |\Pp_i(v_i)| = \log_2(2|\Pp_i(v_i)|) \leq \log_2(|\Pp_i(u_i)| + |\Pp_i(v_i)|)$.
    Both are at most $\log_2 |\Pp_i(u_i) \cup \Pp_i(v_i)|$, completing the induction.
\end{proof}

\MWUtotal*
\begin{proof} 
    Denote the total weight as $t_i\coloneqq w_i(V(G_i))$ for each $i$. Then $t_0=n$.
    We first show that for every $i\in\{0,1,\ldots,n-2\}$,
    \[
    t_{i+1}\leq t_i+w_i(\Delta_{G_i}(u_i,v_i)).
    \]
    Indeed, by Step~\ref{itm:weightUpdate} of the algorithm, every leader in $\Delta_{G_i}(u_i,v_i)\setminus\{u_i,v_i\}$ doubles its weight, contributing an increase of exactly
    $w_i(\Delta_{G_i}(u_i,v_i)\setminus\{u_i,v_i\})$.
    Moreover, by Step~\ref{itm:weightUpdate}, $v_i$ disappears and $u_i$ receives weight $2\max(w_i(u_i),w_i(v_i))$, so the net change on $\{u_i,v_i\}$ is
    \[
    2\max(w_i(u_i),w_i(v_i))-w_i(u_i)-w_i(v_i)
    =|w_i(u_i)-w_i(v_i)|
    \leq w_i(u_i)+w_i(v_i).
    \]
    Hence the displayed inequality holds. Using~(\ref{equationMUW}), we obtain
    \[
    t_{i+1}\leq t_i\left(1+3r(n-i)^{-1/d}\right)
    \qquad\text{for every }i\in\{0,1,\ldots,n-2\}.
    \]
    Therefore, using $t_0=n$,
    \[
    t_{n-1}\leq n\prod_{i=0}^{n-2}\left(1+3r(n-i)^{-1/d}\right)
    =n\prod_{i=2}^{n}\left(1+3ri^{-1/d}\right).
    \]

    Let
    \[
    m\coloneqq\left|\bigl\{i\in\{0,1,\ldots,n-2\}: L_i(x)\in \Delta_{G_i}(u_i,v_i)\bigr\}\right|.
    \]
    Every time $L_i(x)\in \Delta_{G_i}(u_i,v_i)$, the weight of the current leader of $x$ at least doubles from round $i$ to round $i+1$.
    Iterating over the $m$ relevant rounds, we obtain \(w_{n-1}(L_{n-1}(x))\geq 2^m\), and therefore (using \(1+y\leq e^y\)),
    \[
    2^m\leq w_{n-1}(L_{n-1}(x))\leq t_{n-1}
    \leq n\prod_{i=2}^{n}\left(1+3ri^{-1/d}\right)
    \leq n\exp\left(3r\sum_{i=2}^{n} i^{-1/d}\right).
    \]
    Since $i^{-1/d}$ is decreasing,
    \[
    \sum_{i=2}^{n} i^{-1/d}\leq \int_1^n i^{-1/d}\,di.
    \]
    If $d=1$, this integral is $\log n$, so $m\le \mathcal{O}_{c}(\log n)$.
    If $d>1$, this integral is $\mathcal{O}_d(n^{1-1/d})$, so $m\le \mathcal{O}_{c,d}(n^{1-1/d})$.
    In either case
    \[
    m\leq \cal O_{c,d} (n^{1-1/d} + \log n).\qedhere
    \]
\end{proof}



\end{document}

%% file: nc.tex
\section{Neighborhood Complexity}
\label{apn:nc}

In this appendix, we give the full proof of \Cref{thm:nei-comp-NIP}.
We begin by introducing the relevant notions.
\subsection{Preliminaries}
\label{apn:prelims}

\paragraph{Asymptotic notation.}
Throughout this section, $\subpoly(n)$ denotes an unspecified function $f\from\N\to\mathbb R_{\ge 0}$ such that for every fixed $\eps>0$ we have $f(n)\le n^\eps$ for large enough $n$. We write $\subpoly_k(n)$ when $f$ may additionally depend on a fixed parameter $k$, and similarly we write $\subpoly_{\CC}(n)$ when $f$ may depend on a fixed class of graphs $\CC$. (So this notation is the same as writing $n^{o(1)}$, $n^{o_k(1)}$, or $n^{o_{\CC}(1)}$. However, we introduce the new notation as to make the variables and parameters explicit.) Similarly, $\polylog(n)$ denotes an unspecified function bounded by $p(\log n)$ for some polynomial $p$, and $\polylog_k(n)$ allows dependence of the polynomial on a fixed parameter~$k$.

\paragraph{Graphs.}
A graph $G$ consists of a set $V$ of vertices and a set $E\subset {V\choose 2}$ of edges, which are denoted $V(G)$ and $E(G)$, respectively.
For a graph $G$ with disjoint subsets $A,B \subseteq V(G)$,  let $G[A,B]$ denote the bipartite graph $(A,B,E')$ that is \emph{semi-induced} by $A$ and $B$, where $E'$ contains those edges of $G$ with one endpoint in $A$ and one endpoint in $B$.

\paragraph{Structures.}
A graph is viewed as a relational structure, equipped with a binary relation symbol $E$ denoting adjacency.
A \emph{unary expansion} of a relational structure $A$ is obtained from $A$ by 
adding several unary relations (also called unary predicates) to~$A$.

\subsection{A simple lemma on VC-dimension}
\lemVCind*
\begin{proof}[Proof of \Cref{lem:VC-induction}]
It is enough to prove the claim for the family $\cal F_+$ of all $v$-positive sets in $\cal F$; the argument for the family of $v$-negative sets is symmetric.
If $\cal F_+=\emptyset$, then $\VCdim(\cal F_+)=-\infty$ and there is nothing to prove.

Suppose $\cal F_+$ shatters a set $X\subseteq V$.
Since every set in $\cal F_+$ contains $v$, we must have $v\notin X$.
For each $Y\subseteq X$, choose $F_Y\in\cal F_+$ with $F_Y\cap X=Y$.
As $F_Y$ is $v$-positive, there is a set $F'_Y\in\cal F$ such that $F'_Y\triangle F_Y=\set{v}$.
Then $F'_Y\cap X=Y$, while exactly one of $F_Y$ and $F'_Y$ contains $v$.
Hence
\[
\{F_Y\cap (X\cup\set{v}),\ F'_Y\cap (X\cup\set{v})\}=\set{Y\cup\set{v},\ Y}.
\]
Since this holds for every $Y\subseteq X$, the family $\cal F$ shatters $X\cup\set{v}$.
Therefore,
\(
\VCdim(\cal F)\ge |X|+1,
\)
so $\VCdim(\cal F_+)\le \VCdim(\cal F)-1$.
This proves that $\cal F_+$ has VC-dimension strictly smaller than~$\cal F$.
\end{proof}

\subsection{Exhibiting many Hamming edges}
We prove the following generalization of \Cref{lem:star0}.

\begin{restatable}
  {lemma}{lemstar}\label{lem:star}
  Let $G=(A,B,E)$ be a bipartite graph with $|A|=n>1$, and let $\cal P$ be a partition of $B$.
  Suppose
  \[
    m:=\sum_{P\in \cal P}|\inducedSS{P}{A}|> 2|\cal P|.
  \]
  Then there is a set $X\subseteq A$ such that
  \[
    \sum_{P\in \cal P}|E(\hamming(\inducedSS{P}{X}))|
    \ge \frac{m}{2\ln n+2}.
  \]
\end{restatable}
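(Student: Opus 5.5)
The plan is a random-prefix (averaging) argument in the spirit of the proof of the Sauer--Shelah--Perles lemma. For $Y\subseteq A$ write
\[
  f(Y):=\sum_{P\in\cal P}|\inducedSS{P}{Y}|
  \qquad\text{and}\qquad
  h(Y):=\sum_{P\in\cal P}|E(\hamming(\inducedSS{P}{Y}))|.
\]
We have $f(\emptyset)=|\cal P|$, since each nonempty part gives just the empty trace, and $f(A)=m$.

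The key local identity concerns a fixed $P$, a set $Y$, and an element $y\in Y$. Restricting $\inducedSS{P}{Y}$ to $Y\setminus\set{y}$ identifies exactly the $y$-pairs: every set of $\inducedSS{P}{(Y\setminus\set{y})}$ has one or two preimages in $\inducedSS{P}{Y}$. It has two preimages precisely when those preimages form a Hamming edge of $\inducedSS{P}{Y}$ in direction $y$. Hence
\[
  |\inducedSS{P}{Y}|-|\inducedSS{P}{(Y\setminus\set{y})}|
\]
equals the number of Hamming edges of $\inducedSS{P}{Y}$ whose symmetric difference is $\set{y}$. Every Hamming edge has a unique such direction $y\in Y$. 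Summing over $y\in Y$ and over the parts therefore gives
\[
  h(Y)=\sum_{y\in Y}\bigl(f(Y)-f(Y\setminus\set{y})\bigr).
\]

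Next take a uniformly random ordering $a_1,\dots,a_n$ of $A$, let $X_k=\set{a_1,\dots,a_k}$, and set $g(k):=\mathbb E[f(X_k)]$. The set $X_{k+1}$ is a uniform random $(k+1)$-subset, and conditioned on it, $a_{k+1}$ is a uniform element of it. The identity above therefore yields
\[
  \mathbb E[h(X_{k+1})]=(k+1)\bigl(g(k+1)-g(k)\bigr).
\]
Let $H:=\max_{0\le k<n}\mathbb E[h(X_{k+1})]$. Then $g(k+1)-g(k)\le H/(k+1)$. Telescoping gives
\[
  m-|\cal P|=g(n)-g(0)\le H\sum_{j=1}^n \tfrac1j\le H(\ln n+1).
\]
Since $m>2|\cal P|$, we have $m-|\cal P|>m/2$, so $H> m/(2\ln n+2)$. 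Some $k$ attains this expectation, and hence some concrete set $X=X_{k+1}$ satisfies $h(X)\ge m/(2\ln n+2)$. This $X$ is the set required by the statement.

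The only step needing care is the local identity. One must check that Hamming edges are counted within each part separately: the sum in the statement is per part, and traces from different parts are never merged. One must also check that the counting is exact, so that each set of $\inducedSS{P}{(Y\setminus\set{y})}$ really has at most two preimages and that having two preimages is exactly a $y$-pair. Both facts are immediate from the definitions, so I expect no real obstacle. The rest is the harmonic-sum bound $H_n\le \ln n+1$.
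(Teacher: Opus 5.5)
Your proof is correct. It shares its two key ingredients with the paper's proof. The first is the identity that $|\inducedSS{P}{Y}|-|\inducedSS{P}{(Y\setminus\set{y})}|$ counts the Hamming edges of $\inducedSS{P}{Y}$ in direction $y$, so that summing over $y\in Y$ recovers the total edge count. The second is the harmonic-sum telescoping of $\sum_{P}|\inducedSS{P}{Y}|$ from $m$ at $Y=A$ down to $|\mathcal{P}|<m/2$ at $Y=\emptyset$.

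Where you differ is in how $X$ is selected. The paper runs a deterministic greedy deletion. Starting from $X=A$, it removes any $v$ whose removal lowers the sum by at most $m/(2|X|\,H(n))$, where $H(n)$ is the harmonic number. It then argues by contradiction, via the same telescoping, that this process cannot empty $X$. Finally it sums the above-threshold drops over the surviving elements. You instead average over a uniformly random ordering, relate $\mathbb{E}[h(X_{k+1})]$ to the increments $g(k+1)-g(k)$, and take a prefix length that maximizes the expectation.

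Your version avoids the termination/nonemptiness argument and is a little cleaner to write down, but it is purely existential. The greedy version is directly constructive, and it gives the stronger property that every element of the final $X$ individually carries more than $m/(2|X|\,H(n))$ Hamming edges. Only the total count is used later in \Cref{lem:step}, so either argument suffices there.
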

\begin{proof}
  Let $H(n):=1+1/2+\ldots+1/n$ be the $n$th harmonic number.
  Initially, set $X:=A$.
  As long as there is a vertex $v\in X$ such that 
  \[
    \sum_{P\in\cal P}\left(|\inducedSS{P}{X}|-|\inducedSS{P}{(X-\set v)}|\right)
    \le \frac{m}{2|X|\cdot H(n)},
  \]
  remove $v$ from $X$ and repeat. Otherwise, if there is no such $v$, terminate.

  \begin{claim}
    At the end of the process, the set $X$ is nonempty.
  \end{claim}
  \begin{proof}
    At any point in the process, consider the sum
    \[
      \mu(X):=\sum_{P\in \cal P}|\inducedSS{P}{X}|.
    \]
    Initially, when $X=A$, this sum is equal to $m$ by definition.
    In each step of the process, when a vertex $v$ is removed from $X$, the sum decreases by
    \[
      \sum_{P\in \cal P}\Big(|\inducedSS{P}{X}|-|\inducedSS{P}{(X-\set v)}|\Big)
      \le \frac{m}{2H(n)}\cdot \frac{1}{|X|}.
    \]
    Hence, if we remove all $n$ elements from $X$, the sum decreases by at most
    \[
      \left(\frac{m}{2H(n)}\cdot \frac{1}{n}\right) + \left(\frac{m}{2H(n)}\cdot \frac{1}{n-1}\right) + \ldots + \left(\frac{m}{2H(n)}\cdot \frac{1}{1}\right) = \frac{m}{2H(n)}\cdot \sum_{i=1}^n \frac 1 i
      = \frac{m}{2}.
    \]
    This means $\mu(\emptyset)\geq \mu(A) - m/2 = m/2$.
    On the other hand we have
    \[
      \mu(\emptyset)
      =\sum_{P\in\cal P}|\inducedSS{P}{\emptyset}|
      =\sum_{P\in\cal P}1
      =|\cal P|
      <\frac{m}{2};
    \]
    a contradiction which proves the claim.
  \end{proof}

  For every part $P\in\cal P$ and every vertex $v\in X$, the quantity
  \[
    |\inducedSS{P}{X}|-|\inducedSS{P}{(X-\set v)}|
  \]
  is exactly the number of edges of $\hamming(\inducedSS{P}{X})$ whose endpoints
  differ on~$v$. Therefore every edge of every graph
  $\hamming(\inducedSS{P}{X})$ is counted exactly once when we sum over
  $v\in X$, and hence
  \begin{multline*}
    \sum_{P\in\cal P}|E(\hamming(\inducedSS{P}{X}))|
    = \sum_{v\in X}
    \sum_{P\in\cal P}\left(|\inducedSS{P}{X}|-|\inducedSS{P}{(X-\set v)}|\right)\\
    > \sum_{v\in X}\frac{m}{2|X|\cdot H(n)}
    = |X|\cdot \frac{m}{2|X|\cdot H(n)}
    = \frac{m}{2H(n)}
    \ge \frac{m}{2\ln(n)+2}.
  \end{multline*}
\end{proof}

\subsection{Proof of \Cref{thm:nei-comp-NIP}}

 \Cref{thm:nei-comp-NIP} will follow directly from the next lemma.
\begin{restatable}{lemma}{lemnc}\label{lem:nc}
  Let $\BB$ be a monadically dependent class of bipartite graphs $G=(A,B,E)$
  with no twins in $B$.
  Then \begin{equation}\label{eq:bip-version}
  |B|\le |A|\cdot \subpoly_{\BB}(|A|).
\end{equation}
\end{restatable}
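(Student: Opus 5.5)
We follow the high-level overview of \Cref{sec:neighborhoodComplexity}, now carried out with partitions so that the induction can run on all parts in parallel. Fix $G=(A,B,E)\in\BB$ with $|A|=n$, and let $d$ bound the VC-dimension of $\inducedSS{B}{A}$ over $\BB$. Such a $d$ exists because $\BB$ is monadically dependent.

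We build a sequence $(A_i,B_i,\cal P_i)$ for $i=0,1,\dots$. Start with $A_0=A$, $B_0=B$ and $\cal P_0=\{B\}$. We maintain the following invariants:
\begin{itemize}
\item $\cal P_i$ is a partition of $B_i\subseteq B$ and $A_i\subseteq A$;
\item each $P\in\cal P_i$ has no twins towards $A_i$, so that $|\inducedSS{P}{A_i}|=|P|$;
\item $\VCdim(\inducedSS{P}{A_i})\le d-i$;
\item each part $P$ is labelled by a tuple of at most $i$ vertices of $A$, with distinct parts receiving distinct labels;
\item $|B_i|\ge |B|/(C\log^2 n)^{i}$ for a constant $C=C(d)$.
\end{itemize}

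\textbf{The inductive step.} While $|\cal P_i|<|B_i|/2$, we have $m:=\sum_P|\inducedSS{P}{A_i}|=|B_i|>2|\cal P_i|$, so \Cref{lem:star} applies to the graph $G[A_i,B_i]$ with partition $\cal P_i$. It yields $X\subseteq A_i$ such that the Hamming graphs of the systems $\inducedSS{P}{X}$ together have at least $|B_i|/(2\ln n+2)$ edges.

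Inside each $P$, choose a twin-free representative subset $P'$ towards $X$. By \Cref{cor:hamming}, applied to each part and summed, at least $|B_i|/(d(2\ln n+2))$ of these representatives are non-isolated. At least half of them are non-isolated in the same type of merge graph; say positive, the negative case being symmetric.

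We now form one global bipartite graph between $X$ and the union of the representatives $P'$. In it, $a\in X$ is joined to the representative $b$ if $N(b)\cap X$ is $a$-positive within its own part's system. Apply \Cref{lem:dreier-sampling} to this graph restricted to its non-isolated vertices. This gives $X'\subseteq X$ and $B_{i+1}$ with $|B_{i+1}|\ge |B_i|/(C\log^2 n)$ such that every $b\in B_{i+1}$ has exactly one such neighbour $a\in X'$.

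Define $\cal P_{i+1}$ by splitting each old part $P$ according to this unique neighbour $a$, extend the label by $a$, and set $A_{i+1}=X$. Each new part lies in the positive-merge neighbourhood of $a$ within $\inducedSS{P}{X}$. Hence, by \Cref{lem:blue-vc} (equivalently \Cref{lem:VC-induction}), its VC-dimension drops to at most $d-i-1$, and it stays twin-free towards $A_{i+1}$ because we kept representatives. When $\VCdim\le 0$, every part is a singleton. So the process stops at some $i\le d$ with $|\cal P_i|\ge|B_i|/2\ge |B|/(2(C\log^2 n)^d)$.

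\textbf{The transduction step (main obstacle).} It remains to show $|\cal P_i|\le n\cdot\subpoly(n)$. For this we apply \Cref{thm:weaklySparseCase} to the bipartite graph $H$ between $A$ and a set of part representatives, one vertex $b_P\in P$ per part, where $b_P$ is joined to the at most $i\le d$ label vertices of $P$. The graph $H$ is $K_{d+1,d+1}$-free, since a representative has degree at most $d$.

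The difficulty is ensuring that $H$ (or some graph on which distinct parts still have distinct neighbourhoods) lies in $T(\BB)$ for a fixed transduction $T$ depending only on $d$. We plan to build the formula inductively, with unary predicates marking $A_j$, $B_j$, $X'_j$, the chosen representatives, and the positive/negative choice at each level $j$. The relation ``$a$ is the label added at level $j$ to $b$'' should be first-order definable from the previous level. Concretely, this holds when there exists $b'$ in the same level-$(j-1)$ part such that $N(b)\cap A_j$ and $N(b')\cap A_j$ differ exactly in $a$ and $a\in N(b)$, with $a\in X'_j$. Here ``same part'' is itself defined by equality of the previously defined labels.

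One subtlety is that $b'$ must range over all of $P$, not only $B_{j+1}$. We therefore keep predicates for the earlier levels' sets and quantify over them. Since at most $d$ levels occur and each uses boundedly many predicates, a single formula of depth $O(d)$ works. The resulting class is monadically dependent by transitivity, so \Cref{thm:weaklySparseCase} gives $|\cal P_i|\le n\cdot\subpoly_{\BB}(n)$. Combining the bounds, $|B|\le 2(C\log^2 n)^d\, n\,\subpoly_\BB(n)=n\cdot\subpoly_\BB(n)$.
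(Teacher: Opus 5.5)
Your proposal follows the paper's proof essentially step for step. Your invariants are the paper's $k$-sparsifications, and your inductive step is \Cref{lem:step}, via \Cref{lem:star}, \Cref{cor:hamming}, the merge graphs, \Cref{lem:dreier-sampling} and \Cref{lem:blue-vc}. Your level-by-level definability of the labels is the paper's complexity bookkeeping together with \Cref{lem:transduce}, which makes the transduction uniform by selecting among finitely many formula tuples with unary flags.

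The one step you leave implicit is the final count. Distinct parts can have label tuples with the same underlying set, e.g.\ $(a_1,a_2)$ and $(a_2,a_1)$. So \Cref{thm:weaklySparseCase}, which counts distinct neighbourhoods in $H$, bounds $|\mathcal{P}_i|$ only after multiplying by the at most $d^d$ tuples per set, exactly as in the proof of \Cref{lem:terminal}.
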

We first argue that \Cref{lem:nc} implies 
\cref{thm:nei-comp-NIP}.

\begin{proof}[Proof of \Cref{thm:nei-comp-NIP}]
Let $\BB$ be the class of all bipartite graphs $G[A,B]$
such that $G\in \CC$ and $A,B\subseteq V(G)$ are disjoint,
and no two vertices in $B$ have equal neighborhoods in $A$.
As $\CC$ transduces $\BB$, the class $\BB$ is monadically dependent.
For a graph $G\in \CC$ and set $A\subseteq V(G)$, let $B\subset V(G)-A$ 
be maximal such that no two vertices in $B$ have equal neighborhoods in $A$.
Then
\[
  \Big|\setof{N_G(v)\cap A}{v\in V(G)}\Big|\le  \Big|\setof{N_G(v)\cap A}{v\in A}\Big|+\Big|\setof{N_G(v)\cap A}{v\in B}\Big|   
  \le |A|+|B|,
\]
 Applying \Cref{lem:nc} to $G[A,B]\in\BB$ yields the theorem.
\end{proof}

We now proceed to the proof of \Cref{lem:nc}.
The following notion encapsulates the invariant used in the inductive proof.
\begin{definition}
  Let $G=(A,B,E)$ be a bipartite graph and $k\in\N$.
  A \emph{$k$-sparsification} consists of:
  \begin{itemize}
    \item sets $A_0\subseteq A$ and $B_0\subseteq B$, and
    \item functions $f_1,\ldots,f_k\from B_0\to A$,
  \end{itemize}
  such that
  the mapping $b\mapsto (N(b)\cap A_0,f_1(b),\ldots,f_k(b))$ is an injection from $B_0$ to $2^{A_0} \times A^k$.
For such a $k$-sparsification we define:
\begin{itemize}
  \item The \emph{associated partition} $\cal P$ as the partition of $B_0$ such that two vertices $b,b'$ are in the same part of $\cal P$ if and only if 
  $f_j(b)=f_j(b')$ for all $j=1,\ldots,k$. Note that vertices in the same part of $\cal P$ have pairwise distinct neighborhoods in $A_0$;
  \item The \emph{size} as $s=|B_0|$;
  \item The \emph{dimension} as $\max_{P\in\cal P} \VCdim(\inducedSS{P}{A_0})$; and
  \item The \emph{complexity} as the least $c\ge0$ such that there are first-order formulas $\phi_1(x,y),\ldots,\phi_k(x,y)$ of quantifier rank ${\le}c$, each using ${\le}c$ unary predicates, and there is an expansion $G'$ of $G$ with $c$ unary predicates,
  such that $\phi_j$ defines $f_j$ in $G'$ for $j=1,\ldots,k$, so that 
  \[
    f_j(b)=a\iff G'\models \phi_j(b,a)
    \quad
    \text{for all $b\in B_0$ and $a\in A$.}  
  \]
\end{itemize}
We say that a sparsification is \emph{terminal} if its size $s$ and
associated partition $\cal P$ satisfy $s\le 2|\cal P|$, and
\emph{nonterminal} otherwise.
\end{definition}

Note that a bipartite graph $G=(A,B,E)$ with no twins in $B$ has a trivial $0$-sparsification with $A_0=A$ and $B_0=B$. Its associated partition is $\cal P=\set{B}$, size $s=|B|$, and complexity is $0$.
We prove the following, for bipartite graphs $G$ from a monadically dependent class $\BB$:
\begin{enumerate}
  \item a nonterminal $k$-sparsification can be improved to a $(k+1)$-sparsification with strictly smaller dimension, by losing only a $\polylog(|A|)$ factor in the size, and increasing the complexity only by a constant (\Cref{lem:step});
  \item repeating this argument, we reach a terminal sparsification after at most $d$ steps, where $d$ upper bounds the VC-dimension of every $G$ in $\BB$ (\Cref{lem:iterate});
  \item  a terminal $k$-sparsification of $G\in \BB$ of bounded complexity has size $s\le |A|\cdot \subpoly_{\BB,k}(|A|)$ (\Cref{lem:terminal}).
\end{enumerate}
Combining these three points yields a terminal $k$-sparsification with $k\le d$ and size $s$ satisfying
$$\frac{|B|}{\polylog_d(|A|)}\le s\le |A|\cdot\subpoly_{\BB,d}(|A|).$$
This proves the inequality \eqref{eq:bip-version} in \Cref{lem:nc}. We now prove the necessary lemmas.

\begin{restatable}
  {lemma}{lemtransduce}\label{lem:transduce}
  For every $k,c\in\N$ there is a transduction $T_{k,c}$ with the following property.
  Let $G=(A,B,E)$ be a bipartite graph with a $k$-sparsification 
   of complexity $c$, consisting of sets 
   $A_0\subseteq A,B_0\subseteq B$ and functions $f_1,\ldots,f_k\from B_0\to A$.
   Consider the bipartite graph $H=(A,B_0,E')$
with $N_H(b)=\setof{f_j(b)}{j\in[k]}$ for $b\in B_0$.
Then $H\in T_{k,c}(G)$.
\end{restatable}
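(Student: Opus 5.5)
The plan is to build $T_{k,c}$ explicitly from the formulas witnessing the complexity. By definition there are formulas $\phi_1(x,y),\ldots,\phi_k(x,y)$ of quantifier rank at most $c$, each using at most $c$ unary predicates, and an expansion $G'$ of $G$ by $c$ unary predicates $U_1,\ldots,U_c$, such that for all $b\in B_0$ and $a\in A$ we have $f_j(b)=a$ if and only if $G'\models\phi_j(b,a)$. The key observation is that, up to renaming the unary symbols, there are only finitely many formulas of quantifier rank at most $c$ in the signature $\{E,U_1,\ldots,U_c,\ldots\}$ with free variables $x,y$, up to logical equivalence. So I cannot fix one transduction per tuple $(\phi_1,\ldots,\phi_k)$ directly. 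Instead I take $T_{k,c}$ to be a single transduction that handles all tuples at once. This is legitimate because finite unions of transductions are again transductions: add extra unary predicates encoding a "selector" that is either globally true or false, and take the disjunction over the options guarded by it.

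Concretely, $T_{k,c}$ uses the unary predicates $U_1,\ldots,U_c$, further predicates $\mathsf{A}$ and $\mathsf{B}_0$ marking $A$ and $B_0$, and finitely many selector predicates $S_\tau$, one for each tuple $\tau=(\psi_1,\ldots,\psi_k)$ of representatives of rank-$c$ formulas. Its defining formula is
\[
\theta(x,y)\coloneqq \bigvee_{\tau}\Bigl(\exists z\, S_\tau(z)\wedge \mathsf{B}_0(x)\wedge\mathsf{A}(y)\wedge\bigvee_{j\le k}\psi_j(x,y)\Bigr).
\]
Given $G$ with the sparsification, I interpret the $U_i$ as in $G'$. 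I mark $\mathsf{A}=A$ and $\mathsf{B}_0=B_0$, and make exactly one $S_\tau$ nonempty, namely the one for the tuple equivalent to $(\phi_1,\ldots,\phi_k)$.

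Then for $b\in B_0$ and $a\in A$ the formula $\theta(b,a)$ holds exactly when $a=f_j(b)$ for some $j$. Since $A$ and $B$ are disjoint and $\theta$ only holds with $x\in B_0$ and $y\in A$, the symmetrized edge relation $\theta(u,v)\vee\theta(v,u)$ gives exactly the edges of $H$, with no edges inside $A$ or inside $B_0$. Finally I take the induced subgraph on $A\cup B_0$, which yields $H$.

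The only step needing care is the finiteness claim: there are finitely many inequivalent formulas of bounded quantifier rank over a finite signature with two free variables. This is standard. I would only need to note that the signature used by each $\phi_j$ can be taken among $E,U_1,\ldots,U_c$ after renaming, and that the same renaming is applied consistently across $j$ via the single expansion $G'$. Nothing else is a real obstacle.
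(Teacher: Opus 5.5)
Your proposal is correct and follows essentially the same route as the paper's proof. Both proofs enumerate the finitely many $k$-tuples of rank-$c$ formulas up to equivalence, guess the $c$ witness predicates together with markers for $A$ and $B_0$, select the right tuple via global flags guarded by $\exists z\,S_\tau(z)$, and restrict to the induced subgraph on $A\cup B_0$.
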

\begin{proof}
Fix $k,c\in\N$.
Up to logical equivalence, there are only finitely many binary formulas of
quantifier rank at most $c$ over the signature consisting of the adjacency
relation and $c$ unary predicates.
Hence there are only finitely many $k$-tuples of such formulas; list them as
\[
  \bar\phi^1,\ldots,\bar\phi^m,
  \qquad
  \bar\phi^t=(\phi^t_1,\ldots,\phi^t_k).
\]

Let $T_{k,c}$ be the transduction that guesses:
\begin{itemize}
  \item $c$ unary predicates for the witness expansion from the definition of complexity,
  \item two unary predicates $L,R$ marking the sets $A$ and $B_0$, and
  \item unary predicates $S_1,\ldots,S_m$ used as global flags selecting one
  of the tuples $\bar\phi^1,\ldots,\bar\phi^m$.
\end{itemize}
It then applies the formula
\[
  \psi(x,y)\ :=\ L(x)\land R(y)\land
  \bigvee_{t=1}^m\left(\exists z\,S_t(z)\land \bigvee_{j=1}^k\phi_j^t(y,x)\right),
\]
where the inner disjunction is interpreted as false when $k=0$,
and finally takes the induced subgraph on $L\cup R$.

Now suppose $G$ comes with a $k$-sparsification of complexity $c$ as in the
statement.
By definition of complexity, after choosing a suitable expansion of $G$ by
$c$ unary predicates, there is some index $t\in[m]$ such that
$\phi_1^t,\ldots,\phi_k^t$ define the functions $f_1,\ldots,f_k$ on~$B_0$.
Interpret $L$ as $A$, $R$ as $B_0$, interpret $S_t$ as $V(G)$, and interpret
all other flags $S_{t'}$ as empty.
For this choice of unary predicates, the graph produced by $\psi$ has exactly
the edges $\{a,b\}$ with $a\in A$, $b\in B_0$, and $a=f_j(b)$ for some $j\in[k]$.
Therefore, the induced subgraph on $A\cup B_0$ is precisely the graph
$H=(A,B_0,E')$ with
\[
  N_H(b)=\setof{f_j(b)}{j\in[k]}
  \qquad\text{for every }b\in B_0.
\]
Hence, $H\in T_{k,c}(G)$.
\end{proof}

\begin{lemma}\label{lem:terminal}
  Let $\BB$ be a monadically dependent class of bipartite graphs and let $k,c\in\N$.
  Let $G=(A,B,E)\in\BB$ have a terminal $k$-sparsification of complexity at most $c$ and size $s$.
  Then
  \[
    s\le |A|\cdot \subpoly_{\BB,c,k}(|A|).
  \]
\end{lemma}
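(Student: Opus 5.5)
The plan is to pass to the sparse bipartite graph $H=(A,B_0,E')$ from \Cref{lem:transduce}, where $N_H(b)=\setof{f_j(b)}{j\in[k]}$, and to count the parts of $\cal P$ through the neighborhood complexity of $H$. Terminality then converts this count into a bound on $s$.

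First, by \Cref{lem:transduce}, $H\in T_{k,c}(G)$. So $H$ lies in the class $\DD:=T_{k,c}(\BB)$, which is monadically dependent because transductions compose.

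Second, every $b\in B_0$ has at most $k$ neighbors in $H$. Hence $H$ contains no $K_{k+1,k+1}$ as a subgraph: any copy would need $k+1$ vertices on one side, and whether that side lies in $A$ or in $B_0$, some vertex of $B_0$ would have $k+1$ neighbors. We therefore apply \Cref{thm:weaklySparseCase} to $H\in\DD$ with $t=k+1$ and the set $A\subseteq V(H)$. This gives
\[
\Big|\setof{N_H(b)}{b\in B_0}\Big|\le\Big|\setof{N_H(v)\cap A}{v\in V(H)}\Big|\le |A|\cdot\subpoly_{\BB,c,k}(|A|).
\]

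Third, relate the parts of $\cal P$ to these neighborhoods. Two vertices $b,b'$ with $N_H(b)=N_H(b')$ have the same set $\set{f_1(b),\dots,f_k(b)}$, but not necessarily the same tuple. A single set of size at most $k$ corresponds to at most $k^k$ tuples, so $|\cal P|\le k^k\cdot|\setof{N_H(b)}{b\in B_0}|$. The cleaner alternative is to strengthen \Cref{lem:transduce} to a transduction that outputs an ordered encoding, for example $k$ colour classes of copies of $A$; I would only do this if the counting step caused trouble. With the counting argument, terminality gives
\[
s\le 2|\cal P|\le 2k^k\,|A|\cdot\subpoly_{\BB,c,k}(|A|),
\]
and the constant is absorbed into the $\subpoly$ term.

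The main obstacle is the third step, checking that parts of $\cal P$ map to $H$-neighborhoods with only bounded multiplicity. The unordered-versus-ordered issue is dealt with by the $k^k$ factor. Everything else is a direct application of results stated earlier, together with the closure of monadic dependence under transductions.
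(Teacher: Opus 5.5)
Your proof is correct and follows the paper's argument. You transduce the degree-$\le k$ graph $H$ via \Cref{lem:transduce}, note that it is $K_{k+1,k+1}$-free, apply \Cref{thm:weaklySparseCase} with $t=k+1$ to the class $T_{k,c}(\BB)$, lose a factor $k^k$ in passing from ordered tuples to unordered neighborhoods, and finish with $s\le 2|\mathcal{P}|$. The only cosmetic difference is that the paper first restricts $H$ to $A\cup B^*$, where $B^*$ contains one representative per part of $\mathcal{P}$, whereas you count the parts directly through the well-defined map $P\mapsto N_H(b)$; both are equally valid.
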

\begin{proof}
Let $A_0\subseteq A$, $B_0\subseteq B$, and $f_1,\ldots,f_k\from B_0\to A$
witness the given $k$-sparsification,
and let $\cal P$ be the associated partition. We prove that $|\cal P|\le |A|\cdot \subpoly_{\BB,c,k}(|A|)$, which will imply the claim as $s\le 2|\cal P|$ by terminality.
Choose a set $B^*\subseteq B_0$ containing exactly one vertex from each part of the associated partition
$\cal P$.

By \Cref{lem:transduce}, there is a transduction $T_{k,c}$ such that the
bipartite graph $H_0$ with parts $A$ and $B_0$ and
edges $\setof{\set{b,f_j(b)}}{b\in B_0,j\in[k]}$ 
belongs to $T_{k,c}(G)$.
Since transductions allow taking induced subgraphs, the graph
$H\coloneqq H_0[A\cup B^*]$ also belongs to $T_{k,c}(G)$.

Each vertex of $B^*$ has degree at most $k$ in $H$, so $H$ is
$K_{k+1,k+1}$-free.
Moreover, distinct vertices of $B^*$ lie in different parts of $\cal P$, hence
their tuples
\[
  (f_1(b),\ldots,f_k(b))
\]
are pairwise distinct.
Therefore, for every set $S\subseteq A$, there are at most $|S|^k$ vertices
$b\in B^*$ with $N_H(b)=S$, because each coordinate of the above tuple must
then belong to $S$.
Since every such $S$ has size at most $k$, there are in fact at most
$k^k$ such vertices (with $0^0=1$). Consequently, 
\[
  |\cal P|=|B^*|
  \le k^k\cdot \Big|\setof{N_H(b)}{b\in B^*}\Big|.
\]

Since $T_{k,c}(\BB)$ is again monadically dependent and $H\in T_{k,c}(\BB)$,
applying \Cref{thm:weaklySparseCase} with $t=k+1$ yields
\[
  \Big|\setof{N_H(b)}{b\in B^*}\Big|
  =\Big|\setof{N_H(b)\cap A}{b\in B^*}\Big|
  \le |A|\cdot \subpoly_{\BB,c,k}(|A|).
\]
The two inequalities,
together with $s\le 2|\cal P|$,
prove the
statement.
\end{proof}

The next lemma is the key ingredient of the proof of \Cref{thm:nei-comp-NIP}.
\begin{lemma}\label{lem:step}
  Suppose $G=(A,B,E)$ has a nonterminal $k$-sparsification of size $s$,
  dimension $d$, and complexity $c$.
  Then there is a $(k+1)$-sparsification of $G$ with:
  \begin{itemize}
    \item size at least $\Omega(\frac{s}{d\cdot \log^2(|A|)})$,
    \item dimension at most $d-1$,
    \item complexity at most $c+5$.
  \end{itemize}
\end{lemma}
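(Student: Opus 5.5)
The plan is to carry out, within each part of the associated partition in parallel, the one-step construction sketched in the overview of \Cref{sec:neighborhoodComplexity}. Let $A_0,B_0,f_1,\ldots,f_k$ witness the given sparsification and let $\cal P$ be its associated partition. Since vertices in a common part have distinct neighborhoods on $A_0$, we have $\sum_{P\in\cal P}|\inducedSS{P}{A_0}| = s > 2|\cal P|$ by nonterminality.

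\textbf{Step 1: many Hamming edges.} Apply \Cref{lem:star} to the bipartite graph $(A_0,B_0)$ with the partition $\cal P$. This yields $X\subseteq A_0$ with $\sum_P |E(\hamming(\inducedSS{P}{X}))|\ge s/(2\ln|A|+2)$. (If $|A_0|=1$, the statement can be handled directly.) For each $P$, choose $P'\subseteq P$ containing one vertex per distinct trace on $X$. Each $\inducedSS{P'}{X}$ has VC-dimension at most $d$, since restricting the ground set does not increase it. By \Cref{cor:hamming}, applied part by part and summed, at least $s/(d(2\ln|A|+2))$ of the vertices of $\bigcup P'$ are non-isolated in the Hamming graph of their own part. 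Each such vertex is non-isolated in the positive or in the negative merge graph of its part, so one sign, say positive, captures at least half of them.

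\textbf{Step 2: isolate a unique witness.} Form a single bipartite graph $M$ with sides $X$ and the selected non-isolated vertices $b$. Put $a\sim b$ if the trace $N(b)\cap X$ is $a$-positive inside $\inducedSS{P'}{X}$, where $P$ is the part of $b$. Apply \Cref{lem:dreier-sampling} to get $X'\subseteq X$ and $B_1$ of size $\Omega(s/(d\log^2|A|))$ such that each $b\in B_1$ has exactly one $M$-neighbor $f_{k+1}(b)\in X'$. The new sparsification is $A_0':=X$, $B_0':=B_1$, with functions $f_1,\ldots,f_{k+1}$.

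\textbf{Step 3: verify the properties.} Injectivity holds because two elements of $B_1$ with the same $f_1,\ldots,f_k$ lie in the same $P'$, which has distinct traces on $X$. A new part consists of those $b$ in a fixed $P'$ with a fixed $f_{k+1}(b)=a$. It is therefore a subfamily of the $a$-positive sets of $\inducedSS{P'}{X}$, so \Cref{lem:VC-induction} gives dimension at most $d-1$.

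\textbf{Main obstacle: complexity at most $c+5$.} We must define $f_{k+1}$ by a bounded-rank formula with few extra unary predicates. The plan uses predicates for $X$, $X'$, $B_1$, $\bigcup P'$, and the sign choice, together with the old predicates. The formula $\phi_{k+1}(b,a)$ says: $a\in X'$, $a\in N(b)$, and there exists $b'\in\bigcup P'$ with $\phi_j(b,z)\leftrightarrow\phi_j(b',z)$ for all $z$ and all $j\le k$ (so $b'$ is in the same part), $a\notin N(b')$, and $N(b)\cap X\setminus\{a\}=N(b')\cap X\setminus\{a\}$. This raises the quantifier rank by a constant and uses five new predicates. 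The bookkeeping must fit within the definition of complexity (counting $c$ as both the rank bound and the predicate bound), and this is where care is needed. Restricting the equality of $f_j$-values to a constant number of quantifiers is fine because each $f_j$ is a function.
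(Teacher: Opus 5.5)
Your proposal is correct and follows the paper's proof step for step. Both apply \Cref{lem:star} to $G[A_0,B_0]$ with the partition $\cal P$, keep one representative per trace, use \Cref{cor:hamming} plus a choice of sign, and use \Cref{lem:dreier-sampling} to obtain a unique merge-graph neighbor defining $f_{k+1}$. The dimension drop comes from \Cref{lem:VC-induction}, which is what \Cref{lem:blue-vc} rests on, and the complexity bound uses the same five unary predicates. The case $|A_0|=1$ that you set aside never arises: if $|A_0|\le 1$, then each $\inducedSS{P}{A_0}$ has at most two sets, so $s\le 2|\cal P|$, contradicting nonterminality.
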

\begin{proof}
Let $A_0\subseteq A$, $B_0\subseteq B$, and $f_1,\ldots,f_k\from B_0\to A$
witness the given $k$-sparsification.
Let $\cal P$ be its associated partition.
Since the sparsification is nonterminal, we have $s>2|\cal P|$.
For every part $P\in\cal P$, the functions $f_1,\ldots,f_k$ are constant on $P$,
so the injectivity condition in the definition of a sparsification implies that
the map $b\mapsto N_G(b)\cap A_0$ is injective on $P$.
Therefore,
\[
  |\inducedSS{P}{A_0}|=|P|
  \qquad\text{for every }P\in\cal P,
\]
and hence
\[
  \sum_{P\in\cal P}|\inducedSS{P}{A_0}|=\sum_{P\in\cal P}|P|=|B_0|=s.
\]
In particular, $d\ge 1$, since otherwise every set system $\inducedSS{P}{A_0}$
would have VC-dimension $0$ and hence size $1$, contradicting
$s>2|\cal P|$.
In particular, $|A_0|>1$, since otherwise every set system $\inducedSS{P}{A_0}$
would have size at most $2$, contradicting $s>2|\cal P|$.

Apply \cref{lem:star} to the bipartite graph $G[A_0,B_0]$ and the partition
$\cal P$ of $B_0$. Since
\[
  \sum_{P\in\cal P}|\inducedSS{P}{A_0}|=s>2|\cal P|,
\]
we obtain a set $A_1\subseteq A_0$ such that
\begin{equation}\label{eq:step-many-hamming-edges}
  \sum_{P\in\cal P}|E(\hamming(\inducedSS{P}{A_1}))|
  \ge \frac{s}{2\ln |A_0|+2}.
\end{equation}

Choose a set $B^*\subseteq B_0$ so that for each part $P\in\cal P$,
\[
  \inducedSS{(P\cap B^*)}{A_1}=\inducedSS{P}{A_1}
\]
and no two vertices in $P\cap B^*$ have equal neighborhoods in $A_1$.
Thus, the map $b\mapsto N_G(b)\cap A_1$ is a bijection from $P\cap B^*$ to
$\inducedSS{P}{A_1}$.

For each $P\in\cal P$, let $H_P$ be the Hamming graph of
$\inducedSS{P}{A_1}$, with vertex set identified with $P\cap B^*$ (identified using the previously constructed bijection between the two).
Let $H^*$ be the disjoint union of the graphs $H_P$, for $P\in\cal P$.
Then $V(H^*)=B^*$ and, by \eqref{eq:step-many-hamming-edges},
\[
  |E(H^*)|
  =\sum_{P\in\cal P}|E(H_P)|
  \ge \frac{s}{2\ln |A_0|+2}.
\]

For each $P\in\cal P$, let $M_P^+$ and $M_P^-$ denote the positive and
negative merge graphs of $\inducedSS{P}{A_1}$.
Let $M^+$ and $M^-$ be the disjoint unions of the graphs
$M_P^+$ and $M_P^-$, respectively, identified along the common part $A_1$.
As $A_1\subset A_0$,
 we have $\VCdim(\inducedSS{P}{A_1})\le \VCdim(\inducedSS{P}{A_0})\le d$ for every $P\in\cal P$.
Then \Cref{cor:hamming} implies that $H_P$ has at least $|E(H_P)|/d$
non-isolated vertices.
Summing over $P\in\cal P$, we infer that $H^*$ has at least
\[
  \frac{|E(H^*)|}{d}
  \ge \frac{s}{d(2\ln |A_0|+2)}
\]
non-isolated vertices.
Every such vertex is non-isolated in at least one
of $M^+$ and $M^-$, so for some $\sigma\in\set{+,-}$ the graph $M^\sigma$
has at least
\begin{equation}\label{eq:step-many-nonis}
  \frac{s}{2d(2\ln |A_0|+2)}
\end{equation}
non-isolated vertices in $B^*$.

We now choose a large subset on which each vertex has a unique
$M^\sigma$-neighbor.
We apply \cref{lem:dreier-sampling} to the bipartite graph obtained
from $M^\sigma$ by removing the isolated vertices. 
We obtain sets $A_1'\subseteq A_1$ and $B_1\subseteq B^*$ such that every
vertex of $B_1$ has exactly one neighbor in $A_1'$ in $M^\sigma$, and
\[
  |B_1|
  \ge \Omega\left(\frac{s}{d\cdot \log^2|A|}\right),
\]
because $|A_1|\le |A_0|\le |A|$.

Define $f_{k+1}\from B_1\to A_1'$ by mapping each $b\in B_1$ to its unique
neighbor in $A_1'$ in the graph $M^\sigma$.
Let $\cal P_1$ be the family of all nonempty sets of the form
$f_{k+1}^{-1}(v)\cap P$, where $P\in\cal P$ and $v\in A_1'$.

We claim that $A_1$, $B_1$, and the functions
$f_1,\ldots,f_k,f_{k+1}$ form a $(k+1)$-sparsification of $G$.
Indeed, if $b,b'\in B_1$ lie in different parts of $\cal P$, then
$(f_1(b),\ldots,f_k(b))\neq (f_1(b'),\ldots,f_k(b'))$.
If they lie in the same part $P\in\cal P$, then 
$N_G(b)\cap A_1\neq N_G(b')\cap A_1$ whenever $b\neq b'$.
This is because we have chosen $B^* \supseteq B_1$ such that all vertices in $P \cap B^*$ have pairwise different neighborhoods on $A_1$ in $G$.
Hence, the map
\[
  b\mapsto (N_G(b)\cap A_1,f_1(b),\ldots,f_k(b),f_{k+1}(b))
\]
is injective on $B_1$.
Moreover, by construction, $\cal P_1$ is exactly the associated partition of
this new sparsification.

We now verify the claimed properties.
First, let $P'\in\cal P_1$.
Then $P'=f_{k+1}^{-1}(v)\cap P$ for some $P\in\cal P$ and $v\in A_1'$.
The set system $\inducedSS{P'}{A_1}$ is a subfamily of the neighborhood of $v$
in the $\sigma$-merge graph of $\inducedSS{P}{A_1}$, so
\Cref{lem:blue-vc} yields
\[
  \VCdim(\inducedSS{P'}{A_1})\le \VCdim(\inducedSS{P}{A_1}) - 1 \le \VCdim(\inducedSS{P}{A_0}) - 1 \le d-1,
\]
where the second inequality holds as $A_1\subset A_0$.
Thus, the new sparsification has dimension at most $d-1$.
Second, the size of the new sparsification is
\[
  s'\coloneqq |B_1|,
\]
and by the choice of $B_1$ we have
\[
  s' \ge \Omega\left(\frac{s}{d\cdot \log^2|A|}\right).
\]

Finally, let $G'$ be an expansion of $G$ witnessing that the original
$k$-sparsification has complexity $c$,
and let $\phi_1,\ldots,\phi_k$ be formulas defining
$f_1,\ldots,f_k$ in $G'$.
Expand $G'$ further by five unary predicates marking the sets
$A_1$, $A_1'$, $B_1$, $B^*$, and a set $S$ which is all of $V(G)$ if
$\sigma=+$ and empty if $\sigma=-$.
Using the formulas $\phi_1,\ldots,\phi_k$, the relation of belonging to the
same part of $\cal P$ is first-order definable with quantifier rank at most
$c+1$.
Consequently, the edge relation of $M^\sigma$ on $B^*\times A_1$ is also
first-order definable with quantifier rank at most $c+2$:
for $b\in B^*$ and $a\in A_1$, we ask whether there exists
$b'\in B^*$ in the same part of $\cal P$ such that
$N_G(b)\cap (A_1-\set{a})=N_G(b')\cap (A_1-\set{a})$ and
$b,b'$ differ on $a$ in the direction prescribed by $\sigma$.
Restricting this relation to $B_1\times A_1'$ gives a formula defining the
function $f_{k+1}$, of quantifier rank at most $c+2$.
Thus the new sparsification has complexity at most $c+5$.
(We only used quantifier rank $c + 2$, but $c+5$ many unary predicates.)
\end{proof}

\begin{lemma}\label{lem:iterate}
  Let $G=(A,B,E)$ be a bipartite graph with no twins in $B$, with
  $|A|>1$, and with $B\neq\emptyset$. 
  Suppose $\VCdim(\inducedSS{B}{A})\le d$ for some $d$.
  Then for some $i\le d$, the graph $G$ has a terminal $i$-sparsification 
  of complexity at most $5i$ 
  and size $s$ such that
  \[
    s\ge \frac{|B|}{\polylog_d(|A|)}.
  \]
\end{lemma}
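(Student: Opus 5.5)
We will start from the trivial $0$-sparsification and apply \Cref{lem:step} repeatedly until we reach a terminal one. Concretely, set $A_0=A$, $B_0=B$ and take no functions. Since there are no twins in $B$, the map $b\mapsto N(b)\cap A$ is injective, so this is a $0$-sparsification. Its associated partition is $\set{B}$, its size is $s_0=|B|$, its complexity is $0$, and its dimension is $\VCdim(\inducedSS{B}{A})\le d$.

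Inductively, suppose we hold an $i$-sparsification $S_i$ with three properties: complexity at most $5i$, dimension at most $d-i$, and size
\[
s_i\ge \frac{|B|}{C^i\, d^i\,\log^{2i}|A|},
\]
where $C$ is the absolute constant hidden in the $\Omega$ of \Cref{lem:step}. If $S_i$ is terminal, we stop. Otherwise \Cref{lem:step} produces an $(i+1)$-sparsification. Its complexity is at most $5i+5$. Its dimension is at most $(d-i)-1$, since the bound on the dimension is monotone. Its size is at least $s_i/(C d\log^2|A|)$. In applying \Cref{lem:step}, the dimension parameter is the actual dimension $d_i\le d-i$, and the loss factor $d_i\log^2|A|\le d\log^2|A|$. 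If $d_i=0$ we never apply the lemma: in that case every part has VC-dimension $0$, hence size $1$ because its members have distinct traces on $A_0$, so $s_i=|\cal P|\le 2|\cal P|$ and $S_i$ is terminal. The proof of \Cref{lem:step} already shows that a nonterminal sparsification has dimension at least $1$, so the lemma is always applicable when we call it.

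The process terminates at some $i\le d$. The dimension decreases strictly at each step, so after $d$ steps it would be at most $0$. If we ever reached dimension $0$, the argument above shows the sparsification is terminal. Moreover, a part with VC-dimension $-\infty$ is empty, which cannot occur because parts are nonempty by definition. At termination we have $i\le d$, complexity at most $5i$, and
\[
s_i\ge \frac{|B|}{(C d\log^2|A|)^{d}},
\]
which is $|B|/\polylog_d(|A|)$ since $|A|>1$. For small $|A|$ the constant loss is absorbed into the polylog function.

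The main technical care lies in the bookkeeping: checking that \Cref{lem:step} uses the \emph{current} dimension and that $\log|A|>0$ so the bounds make sense, both guaranteed by $|A|>1$. We must also confirm that the hypothesis $B\neq\emptyset$ keeps $s$ positive and the partition nonempty. No further obstacle is expected.
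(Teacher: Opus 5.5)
Your proposal is correct and follows essentially the same route as the paper: start from the trivial $0$-sparsification and iterate \Cref{lem:step} while the current sparsification is nonterminal, bounding the loss factor by $d_j\log^2|A|\le d\log^2|A|$. You then note that dimension $0$ forces every part to be a singleton, so the sparsification is terminal by step $d$ at the latest. The only difference is cosmetic: the paper halts unconditionally at $j=d$ and checks terminality there, whereas you halt only when terminal and argue that this happens at some $i\le d$.
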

\begin{proof}
  Set $n\coloneqq|A|$ and $m\coloneqq|B|$.
Let $\mathbf S_0$ be the trivial $0$-sparsification of $G$, given by
$A_0\coloneqq A$ and $B_0\coloneqq B$.
Since there are no twins in $B$, this is indeed a sparsification, with
size $s_0=m$, dimension at most $d$, and complexity $c_0=0$.

For $j=0,1,2,\ldots$, construct $\mathbf S_j$ recursively as follows.
Write $s_j$, $d_j$, and $c_j$ for the size, dimension, and
complexity of $\mathbf S_j$, and write $\cal P_j$ for its associated
partition.

If $\mathbf S_j$ is terminal, or if $j=d$, stop and set $i\coloneqq j$.
Otherwise, \Cref{lem:step} yields a $(j+1)$-sparsification
$\mathbf S_{j+1}$ of size $s_{j+1}$ and complexity $c_{j+1}$ such that
\[
  s_{j+1}
  \ge \Omega\left(\frac{s_j}{d_j\cdot \log^2 n}\right)
  \ge \Omega\left(\frac{s_j}{d\cdot \log^2 n}\right),
\]
while $c_{j+1}\le c_j+5$ and $d_{j+1}\le d_j-1$.

Let $0 \leq i \leq d$ be the index for which the above construction stops.
Repeated application of the above bounds
gives
\[
  s_i\ge \frac{m}{\polylog_d(n)}
  \qquad\text{and}\qquad
  c_i\le 5i\qquad\text{and}\qquad d_i\le d-i.
\]
We argue that $\mathbf S_i$ is terminal,
which will yield the conclusion for $s\coloneqq s_i$.

If $i<d$, then $\mathbf S_i$ is terminal by construction.
Suppose now that $i=d$.
Then $\mathbf S_d$ has dimension~$0$.
Let $P\in\cal P_d$.
By definition of the associated partition, the functions
$f^d_1,\ldots,f^d_d$ are constant on $P$.
Since $\mathbf S_d$ is a sparsification, the map
$b\mapsto N_G(b)\cap A_d$ is injective on~$P$. 
Hence, the vertices of $P$ have pairwise different neighborhoods on $A_d$, and we have $|\inducedSS{P}{A_d}| = |P|$.
Furthermore, as $d_i=0$, the set system $\inducedSS{P}{A_d}$ has VC-dimension $0$,
and hence size $1$.
Thus, also $|P|=1$ for every $P\in\cal P_d$, and therefore
\[
  s_d=|B_d|=|\cal P_d|\le 2|\cal P_d|.
\]
So $\mathbf S_d$ is terminal as well.
\end{proof}

We now prove \cref{lem:nc}, which we restate here for convenience.

\lemnc*

\begin{proof}
Fix now a graph $G=(A,B,E)\in\BB$,
and let $n\coloneqq |A|$ and $m\coloneqq |B|$.
If $|A|\le 1$ or $|B|=0$ then the statement is trivial, so suppose $|A|\ge 2$ and $|B|\ge 1$.
As $\BB$ is monadically dependent, the set system $\inducedSS{B}{A}$ has
VC-dimension bounded by a constant $d$ depending only on~$\BB$.
By \Cref{lem:iterate}, for some $i\le d$, the graph $G$ has a
terminal $i$-sparsification of size $s$ and complexity at most $5i$ such
that
\[
  s\ge \frac{m}{\polylog_d(n)}.
\]
By \Cref{lem:terminal}, as $i\le d$, we have
\[
  s
  \le n\cdot \subpoly_{\BB,d}(n).
\]
Combining the two inequalities, we obtain
\[
  m
  \le n\cdot \polylog_d(n)\cdot \subpoly_{\BB,d}(n)\le n\cdot \subpoly_{\BB}(n),
\]
 which proves  \eqref{eq:bip-version}.
\end{proof}